\newtheorem{theorem}{Theorem}
\newtheorem{definition}[theorem]{Definition}
\newtheorem{lemma}[theorem]{Lemma}
\newtheorem{proposition}[theorem]{Proposition}
\definecolor{green}{RGB}{10,180,10} 
\newcommand{\poly}{\textup{poly}}
\newcommand{\MMS}{\mathit{MMS}}
\newcommand{\wM}{\widetilde{M}}
\title{A Little Charity Guarantees Almost Envy-Freeness}
\author{\begin{tabular}{ccc}Bhaskar Ray Chaudhury\thanks{MPI for Informatics, Saarland Informatics Campus, Germany. {\tt $\{$braycha, mehlhorn$\}$@mpi-inf.mpg.de}} &
Telikepalli Kavitha\thanks{Tata Institute of Fundamental Research, India. Work done at MPI for Informatics, Saarland Informatics Campus, Germany. {\tt kavitha@tifr.res.in}} &
           \setcounter{footnote}{0} Kurt Mehlhorn\footnotemark \\ \setcounter{footnote}{2}
          & Alkmini Sgouritsa\thanks{University of Liverpool, UK. Work done at MPI for Informatics, Saarland Informatics Campus, Germany. {\tt alkmini@liv.ac.uk}}
 \end{tabular}}
\newcommand{\abs}[1]{| #1 |}
\newcommand{\sset}[1]{\{ #1 \} }
\date{}
\begin{document}
\maketitle

\begin{abstract}
  
Fair division of indivisible goods is a very well-studied problem. The goal of this problem is to distribute $m$ goods to $n$ agents in a ``fair'' manner, where every agent has a valuation for each subset of goods. We assume general valuations.

Envy-freeness is the most extensively studied notion of fairness. However, envy-free allocations do not always exist when goods are indivisible. The notion of fairness we consider here is ``envy-freeness up to any good'' (EFX) where no agent envies another agent after the removal of any single good from the other agent's bundle. It is not known if such an allocation always exists. 

We show there is always a partition of the set of goods into $n+1$ subsets $(X_1,\ldots,X_n,P)$ where for $i \in [n]$,
$X_i$ is the bundle allocated to agent $i$ and the set $P$ is unallocated (or donated to charity) such that we have:
\begin{itemize}
\item envy-freeness up to any good,
\item no agent values $P$ higher than her own bundle, and
\item fewer than $n$ goods go to charity, i.e., $|P| < n$ (typically $m \gg n$).
\end{itemize}
Our proof is constructive and leads to a pseudo-polynomial time algorithm to find such an allocation. When agents have additive valuations and $\abs{P}$ is large (i.e., when $|P|$ is close to $n$), our allocation also has a good maximin share (MMS) guarantee. Moreover, a minor variant of our algorithm also shows the existence of an allocation that is $4/7$ \emph{groupwise} maximin share (GMMS): this is a notion of fairness stronger than MMS. This improves upon the current best bound of $1/2$ known for an approximate GMMS allocation.

\vspace{0.5cm}

\footnotesize{ This version of the paper goes beyond the preliminary version published in SODA 2020~\cite{CKMS20} in two key points:
\begin{enumerate}
    \item The pseudo-polynomial algorithm works when agents have general valuation functions (not just gross-substitute valuations).
     \item We introduce a relaxed definition of ``most envious agent''.
\end{enumerate}
}

\end{abstract}

\section{Introduction}
\label{sec:intro}
Fair division of goods among competing agents is a fundamental problem in Economics and Computer Science. There is a set $M$ of $m$ goods and the goal is to {\em allocate} goods among $n$ agents in a {\em fair} way. An allocation is a partition of $M$ into disjoint subsets $X_1,\ldots,X_n$ where $X_i$ is the set of goods given to agent $i$.
When can an allocation be considered ``fair''? One of the most well-studied notions of fairness is {\em Envy-freeness}. Every agent has a {\em value} associated with each subset of $M$ and agent~$i$ {\em envies} agent $j$ if $i$ values $X_j$ more than $X_i$. An allocation is {\em envy-free} if no agent envies another. An envy-free allocation can be regarded as a fair and desirable partition of $M$ among the $n$ agents since no agent envies another; as mentioned in~\cite{TimPlaut18}, such a mechanism of partitioning land dates back to the Bible.

Unlike land which is divisible, goods in our setting are {\em indivisible} and an envy-free allocation of the given set of goods need not exist. Consider the following simple example with two agents and a single good that both agents desire: one of the agents has to receive this good and the other agent envies her. Since envy-free allocations need not exist, several relaxations have been considered.

\medskip

\noindent{\bf Relaxations.}
Budish \cite{budish2011combinatorial} introduced the notion of {\em EF1}: this is an allocation of goods that is ``envy-free up to \emph{one} good''. In an EF1 allocation, agent $i$ may envy agent $j$, however this envy would vanish as soon as {\em some} good is removed from $X_j$. Note that no good is really removed from $X_j$: this is simply a way of assessing how much $i$ values $X_j$ more than $X_i$. That is, if $i$ values $X_j$ more than $X_i$, then there exists some $g \in X_j$ such that $i$ values $X_i$ at least as much as $X_j \setminus \{g\}$. Going back to the example of two agents and a single good, the allocation where one agent receives this good is EF1. It is known that EF1 allocations always exist; as shown by Lipton et al.\ \cite{LiptonMMS04}, such an allocation can be efficiently computed.\footnote{The algorithm in \cite{LiptonMMS04} was published in 2004 with a different property and EF1 was proposed in 2011.}

Caragiannis et al.\ \cite{CaragiannisKMP016} introduced a notion of envy-freeness called {\em EFX} that is stronger than EF1. An EFX allocation is one that is ``envy-free up to \emph{any} good''. In an EFX allocation, agent~$i$ may envy agent~$j$, however this envy would vanish as soon as {\em any} good is removed from $X_j$. Thus every EFX allocation is also EF1 but not every EF1 allocation is EFX. 
\begin{center}
\begin{minipage}[b]{0.3\linewidth}
\centering
\begin{eqnarray*}
  \setlength{\arraycolsep}{0.5ex}\setlength{\extrarowheight}{0.25ex}
\begin{array}{@{\hspace{1ex}}c@{\hspace{1ex}}||@{\hspace{1ex}}c@{\hspace{1ex}}|@{\hspace{1ex}}c@{\hspace{1ex}}|@{\hspace{1ex}}c@{\hspace{1ex}}|@{\hspace{1ex}}c@{\hspace{1ex}}}
    \  & a \ & b \ & c \ \\[.5ex] \hline
    Agent~1 \ & 1 \ & 1 \ & 2 \ \\[.5ex] \hline
    Agent~2 \ & 1 \ & 1 \ & 2 \ \\[.5ex] 
\end{array}
\end{eqnarray*}
\end{minipage}
\end{center}

Consider the simple example given above: there are three goods $a, b, c$ and two agents with {\em additive valuations}\footnote{The value of a set $S \subseteq M$ is the sum of values of goods in $S$.} (formally defined in Section~\ref{sec:results}) as described below.
Both agents value $c$ twice as much as $a$ or $b$. The allocation where agent~1 gets $\{a\}$ and agent~2 gets $\{b,c\}$ is  EF1 but not EFX. On the other hand, the allocation where agent~1 gets $\{a,b\}$ and agent~2 gets $\{c\}$ is EFX. Indeed, the latter allocation seems fairer than the former allocation.
As said in \cite{CaragiannisGravin19}: ``{\em Arguably, EFX is the best fairness analog of envy-freeness of indivisible items}''.
While it is known that EF1 allocations always exist, the question of whether EFX allocations always exist is still an open problem (despite significant effort, according to \cite{CaragiannisKMP016}). 

Plaut and Roughgarden \cite{TimPlaut18} showed that EFX allocations always exist (i)~when there are only two agents or (ii) when all $n$ agents have the same valuations.
Moreover, it was shown in \cite{TimPlaut18} that exponentially many value queries may be needed to determine EFX allocations even in the restricted setting where there are only two agents with identical {\em submodular} valuation functions\footnote{These are  valuation functions with decreasing marginal values.}. 
It was not known until very recently, if an EFX allocation always exists even when there are only three agents with additive valuations. A positive answer to this question was very recently given by Chaudhury et al.\ \cite{CGM20} who showed that EFX allocations always exist when there are only three agents with additive valuations. It was remarked in \cite{TimPlaut18}: ``{\em We suspect that at least for general valuations, there exist instances where no EFX allocation exists}''. 

\medskip

\noindent{\bf A relaxation of EFX.}
Very recently, Caragiannis et al.\ \cite{CaragiannisGravin19} introduced a more relaxed notion of EFX called {\em EFX-with-charity}. This is a partial allocation that is EFX, i.e., the entire set of goods need not be distributed among the agents. So some goods may be left {\em unallocated} and it is assumed that these unallocated goods are donated to charity. There is a very simple allocation that is EFX-with-charity where {\em no} good is assigned to any agent---thus all goods are donated to charity. Obviously, this is not an interesting allocation and one seeks allocations with better guarantees and one such allocation was shown in \cite{CaragiannisGravin19}. 

Let $X^* = \langle X^*_1,\ldots,X^*_n\rangle$ be an  optimal {\em Nash social welfare} allocation\footnote{This is an allocation that maximizes $\Pi_{i=1}^nv_i(X^*_i)$, where $v_i$ is agent $i$'s valuation function.} on the entire set of goods. It was shown in \cite{CaragiannisGravin19} that 
there always exists an EFX-with-charity allocation $X = \langle X_1,\ldots,X_n\rangle$ where every agent receives at least half the value of her allocation in $X^*$. Interestingly, as shown in \cite{CaragiannisGravin19}, $X_i \subseteq X^*_i$ for all $i$. Unfortunately, there are no upper bounds on 
the {\em number} of unallocated goods or on the {\em value} any agent has for the set of goods donated to charity.

We believe these are important questions to ask. The ideal allocation is one that is EFX and 
allocates {\em all} goods; so we would like a guarantee that a large number of goods have been allocated to agents. Moreover, since EFX allocations guarantee envy-freeness once any good is removed from another agent's set, it is in the same spirit that we seek an EFX (partial) allocation where nobody envies the set of unallocated goods. 
The allocation in \cite{CaragiannisGravin19} gives no guarantee either on the number of  unallocated goods or on whether any agent values the set of unallocated goods more than her own bundle. 
Here we consider the notion of {\em EFX-with-bounded-charity}. That is, we seek EFX-with-charity allocations with bounds on the set given to charity, i.e., a bound on the {\em size} and a bound on the {\em value} of the set of goods donated to charity.

\subsection{Our Results}
\label{sec:results}
Let $N = [n]$ be the set of agents.
Every agent $i \in [n]$ has a valuation function $v_i: 2^M \rightarrow \mathbb{R}_{\ge 0}$, where $M$ is the set of $m$ goods. 

\subsubsection{General valuations}
We show our main existence result for \emph{general} valuation functions, i.e., the only assumptions we make on any valuation function $v_i$ are that
\begin{itemize}
\item[(i)] it is {\em normalized}, i.e., $v_i(\emptyset) = 0$, and (ii)~it is {\em monotone}, i.e., $S \subseteq T$ implies $v_i(S) \le v_i(T)$.
\end{itemize}
In contrast, the EFX-with-charity allocation in  \cite{CaragiannisGravin19} works only for {\em additive valuations}, i.e., for any $S \subseteq M$ and $i \in [n]$, we have $v_i(S) = \sum_{g\in S} v_i(\left\{g\right\})$.

\smallskip

\begin{itemize}
\item We show there always exists an allocation\footnote{Henceforth, allocations are {\em partial} and we will use ``complete allocation'' to refer to one where all goods are allocated.} $X = \langle X_1,\ldots,X_n\rangle$ that satisfies the following properties:
\begin{enumerate}
\item $X$ is EFX, i.e., for any two agents $i, j$:  $v_i(X_i) \ge v_i(X_j \setminus \{g\})$ for any $g \in X_j$;
\item $v_i(X_i) \ge v_i(P)$ for all agents $i$, where 
  $P = M \setminus \cup_{i=1}^n X_i$ is the set of unallocated goods; 
\item $|P| < n$ (recall that $n$ is the number of agents).
\end{enumerate}
\end{itemize}

Note that our result implies that among the $n$ agents, if there is just {\em one} agent who has no preferences (say, for agent~$i$, we have $v_i(S) = 0$ for all $S \subseteq M$) then a complete EFX allocation always exists for general valuations: Find an EFX allocation satisfying all the three properties mentioned above, on the set of agents $N \setminus \left\{i\right\}$. Note that the allocation among the agents in $N \setminus \left\{i\right\}$ is EFX (by condition 1) and we now allocate $P$ (the set of unallocated goods) to agent $i$. Observe that agent~$i$ envies nobody as $i$ has the same valuation for all subsets of goods and nobody envies $i$ as nobody envies $P$ (by condition 2). Thus we have a complete EFX allocation! 

Our proof is constructive. We start with no goods being allocated to the agents and find the claimed allocation by at most $nmV/\Delta$ applications of three simple update rules, where $n$ is the number of agents, $m$ is the number of goods, $V = \max_i v_i(M)$ is the maximum valuation of any agent, and $\Delta = \min_i \min \{|v_i(T) - v_i(S)|: S,T \subseteq M \text{ and } v_i(S) \ne v_i(T)\}$ is the minimum difference between distinct valuations.

The update rules use a \emph{minimal-envied-subset}-oracle: given $S \subseteq M$ such that there is an agent who values $S$ more than her own bundle, find an inclusion-wise minimal subset $Z \subseteq S$ such that there is an  agent who values $Z$ more than her own bundle. This oracle can be realized by a simple algorithm that uses at most $nm$ value queries. 

\begin{itemize}
\item  For general valuations, an EFX allocation with properties~1-3 can be computed with $\poly(n,m,V,1/\Delta)$ value queries, i.e., in
  pseudo-polynomial time.
\end{itemize}


\subsubsection{Identical valuations}
It also follows from our proof that when all agents have the same valuation function, our allocation is {\em complete}. That is, $|P| = 0$. This is an alternate proof to the existence of complete EFX allocations for identical (general) valuations, originally shown in \cite{TimPlaut18}.


\subsubsection{Additive valuations}
The most well-understood class of valuation functions is the set of {\em additive} valuations.
We consider the case when all agents have additive valuations and show that our allocation or very minor variants of our allocation can guarantee several other notions of fairness.

\medskip

\noindent{\bf Ensuring high Nash social welfare.}
We show that modifying the starting step of our algorithm ensures that our allocation $X$, that satisfies properties~1-3 stated above, also has a high Nash social welfare. That is, $v_i(X_i) \ge \frac{1}{2}\cdot v_i(X^*_i)$ as promised in \cite{CaragiannisGravin19}, where $X^* = \langle X^*_1,\ldots,X^*_n\rangle$ is an  optimal Nash social welfare allocation. Here we use the allocation computed in \cite{CaragiannisGravin19} as a black box in our starting step, thus our result can be regarded as an extension of the result in \cite{CaragiannisGravin19}.

\medskip

\noindent{\bf Number of Unallocated Goods and MMS Guarantee.}
Another interesting and well-studied notion of fairness is \emph{maximin share}. Suppose agent $i$ has to partition $M$ into $n$ {\em bundles} (or sets) knowing that she would receive the worst bundle with respect to her valuation. Then $i$ will choose a partition of $M$ that maximizes the valuation of the worst bundle (wrt her valuation). The value of this worst bundle is the maximin share of agent~$i$. An important question here is: does there always exist an allocation of $M$ where every agent gets a bundle worth at least her maximin share?

Formally, let $N$ and $M$ be the sets of $n$ agents and $m$ goods, respectively. We define the maximin share of an agent (say, $i$) as follows:
(here $\mathcal{X}$ is the set of all complete allocations)
\[\MMS_i(n,M) = \max_{\langle X_1,\dots,X_n \rangle \in \mathcal{X}} \min_{j \in [n]} v_i(X_j).\]

The goal is to determine an allocation $\langle X_1,X_2, \dots ,X_n \rangle$ of $M$ such that for every $i$ we have $v_i(X_i) \geq \MMS_i(n,M)$. This question was first posed by Budish \cite{budish2011combinatorial}. Procaccia and Wang~\cite{ProcacciaW14} showed that such an allocation need not exist, even in the restricted setting of only three agents! 
Thereafter, {\em approximate-MMS} allocations were studied~\cite{ProcacciaW14, JGargMT19, GhodsiHSSY18, GargTaki19} and there are polynomial time algorithms to find allocations where for all $i$, agent~$i$ gets a bundle of value at least $\alpha\cdot\MMS_i(n,M)$; the current best guarantee for $\alpha$ is $3/4-\epsilon$ by Ghodsi et al.\ \cite{GhodsiHSSY18} (for any $\epsilon > 0$) and this was very recently improved to $3/4$ by Garg and Taki~\cite{GargTaki19}.

Amanatidis et al.\ \cite{ABM18} showed that any complete EFX allocation is also a $\tfrac{4}{7}$-MMS allocation. We show that our allocation promises better MMS guarantees when the number of unallocated goods is large.
Let $X = \langle X_1,\ldots,X_n\rangle$ be our allocation as described by properties~1-3 above and $P$ be the set of 
unallocated goods. For any agent $i \in [n]$, we have:
\[v_i(X_i) \ \ge\ \frac{1}{2 - |P|/n}\MMS_i(n,M).\]

Hence, larger the number of unallocated goods, better are the guarantees that we get on MMS. The extreme values are $|P|= 0$ and $|P| = n-1$. When
$|P| = 0$, we have a complete EFX allocation and when $|P| = n-1$, we have an EFX allocation that is an {\em almost-MMS} allocation: 
$v_i(X_i) \ge (1-1/n)\cdot\MMS_i(n,M)$ for all $i$.

\medskip

\noindent{\bf Improved Guarantees for Groupwise MMS.}
Barman et al.\ \cite{BarmanBMN18} recently introduced a notion of fairness called {\em groupwise maximin share} (GMMS) which is stronger than MMS. An allocation is said to be GMMS if the MMS condition is satisfied for {\em every subgroup} of agents and the union of the sets of goods allocated to them. Formally, 
a complete allocation $X = \langle X_1,X_2, \dots , X_n \rangle$ is $\alpha$-GMMS if for any $N' \subseteq N$, we have $v_i(X_i) \geq \alpha\cdot\MMS_i(n', \bigcup_{i \in N'}X_i)$ where $n' = \abs{N'}$. 
Every GMMS allocation, i.e. $\alpha = 1$, is also a complete EFX allocation~\cite{BarmanBMN18}.

It is known~\cite{BarmanBMN18} that GMMS strictly generalizes MMS. In particular, it was shown in \cite{BarmanBMN18} that GMMS allocations rule out some very unsatisfactory allocations that have MMS guarantees. For example, consider an instance with $n$ agents with additive valuations and  a set $M$ of $n-1$ goods and every agent has a valuation of one for each good. Since the number of goods is less than the number of agents, we have $\MMS_i(n,M) = 0$ for every agent $i$. So any allocation has MMS guarantees. It is not hard to see that the only allocation with a GMMS guarantee is one where $n-1$ agents get one good each and one agent is left without any goods. See Subsection~2.1 in~\cite{BarmanBMN18} for more discussion.
Naturally, it is a harder problem to approximate GMMS than MMS. While $\frac{3}{4}$-MMS allocations always exist, the largest $\alpha$ for which $\alpha$-GMMS allocations are known to exist is $\frac{1}{2}$~\cite{BarmanBMN18}.
We extend the result of Amanatidis et al.~\cite{ABM18} for MMS to show the following:

\begin{itemize}
\item A $\frac{4}{7}$-GMMS allocation always exists and can be computed in pseudo-polynomial time.
\end{itemize}
  In particular, we show that modifying the last step of our algorithm results in a complete allocation that is $\tfrac{4}{7}$-GMMS. Very recently and independently, Amanatidis et al.\ \cite{ANM19} showed the same approximation.

\subsection{Our Techniques}
\label{sec:techniques}


We now give an overview of the main ideas used to find our EFX allocation. We first recall the algorithm of Lipton et al.\ \cite{LiptonMMS04} for finding an EF1 allocation. 
They use the notion of an {\em envy-graph}: here each vertex corresponds to an agent and there is an edge $(i,j)$ iff $i$ envies $j$. The invariant maintained is that the envy-graph is a DAG: a cycle corresponds to a cycle of envy and by swapping bundles along a cycle, every agent becomes better-off and the number of envy edges 
does not increase. More precisely, if $i_0 \rightarrow i_1 \rightarrow i_2 \rightarrow \ldots \rightarrow  i_{\ell - 1} \rightarrow i_0$ is a cycle in the envy graph, then reassigning $X_{i_{j+1}}$ to agent $i_j$ for $0 \le j < \ell$ (indices are to be read modulo $\ell$) will increase the valuation of every agent in the cycle. Also if there was an edge from $s$ to some $i_k$ where $s$ is not a part of the cycle, this edge just gets directed now from $s$ to $i_{k+1}$ after we exchange bundles along the cycle. Thus the number of envy edges in the graph does not increase and the valuations of the agents in the cycle goes up. Thus cycles can be eliminated.

The algorithm in \cite{LiptonMMS04} runs in rounds and always maintains an allocation that is also EF1. At the beginning of every round, an unenvied agent $s$ (this is a source vertex in this DAG) is identified and an unallocated good $g$ is allocated to $s$.  The new allocation is also EF1, as nobody will envy the bundle of $s$ after removing the good $g$. 

\medskip

\noindent{\bf The Reallocation Operation.}
We now highlight a key difference between an EF1 allocation and an EFX allocation. From the algorithm of Lipton et al.\ \cite{LiptonMMS04}, it is clear that given an EF1 allocation on a set $M_0$ of goods, one can determine an EF1 allocation on $M_0 \cup M_1$, for any $M_1 \subseteq M \setminus M_0$, by simply adding goods from $M_1$ one-by-one to the existing bundles and changing the owners (if necessary) in a clever way. Intuitively, we never need to cut or merge the bundles formed in any EF1 allocation. We can just append the unallocated goods appropriately to the current bundles.

The above strategy is very far from true for EFX. Consider the example illustrated below with three agents with additive valuations and four goods $a, b$, $c$, and $d$. 

\begin{center}
\begin{minipage}[b]{0.3\linewidth}
\centering
\begin{eqnarray*}
  \setlength{\arraycolsep}{0.5ex}\setlength{\extrarowheight}{0.25ex}
\begin{array}{@{\hspace{1ex}}c@{\hspace{1ex}}||@{\hspace{1ex}}c@{\hspace{1ex}}|@{\hspace{1ex}}c@{\hspace{1ex}}|@{\hspace{1ex}}c@{\hspace{1ex}}|@{\hspace{1ex}}c@{\hspace{1ex}}|@{\hspace{1ex}}c@{\hspace{1ex}}}
    \  & a \ & b \ & c \  & d \ \\[.5ex] \hline
    Agent~1 \ & 0 \ & 1 \ & 1 \ & 2 \ \\[.5ex] \hline
    Agent~2 \ & 1 \ & 0 \ & 1 \ & 2 \ \\[.5ex] \hline
    Agent~3 \ & 1 \ & 1 \ & 0 \ & 2 \ \\[.5ex]
\end{array}
\end{eqnarray*}
\end{minipage}
\end{center}

An EFX allocation for the first three goods has to give exactly one of $a, b, c$ to each of the three agents. However an EFX allocation for all the four goods has to allocate the singleton set $\{d\}$ to some agent (say, agent~1) and say, $\{a\}$ to agent~2 and $\{b,c\}$ to agent~3. Thus the allocation needs to be {\em cut and merged}. When there are many agents - each with her own valuation, figuring out the cut-and-merge operations is the difficult step. 
Here we implement our global reallocation operation as follows.

\medskip

\noindent{\bf Improving Social Welfare.}
Suppose we have an EFX allocation $X = \langle X_1,\ldots,X_n\rangle$ on some subset $M_0 \subset M$. We would now like to add a good $g \in M \setminus M_0$. However we will not be able to guarantee an EFX allocation on $M_0 \cup \{g\}$. What we will ensure is that either case~(i) or case~(ii) occurs:  
\begin{itemize}
\item[(i)] We have an EFX allocation $X' = \langle X'_1,\ldots,X'_n\rangle$ on a subset of $M_0 \cup \{g\}$ such that $v_i(X'_i) \ge v_i(X_i)$ for all $i$ and for at least one agent $j$ we have $v_j(X'_j) > v_j(X_j)$. Thus $\sum_{i\in[n]} v_i(X'_i) > \sum_{i\in[n]} v_i(X_i)$; in other words, the {\em social welfare} strictly improves. 
\item[(ii)] We have an EFX allocation on $M_0 \cup \{g\}$ and the social welfare does not decrease.
\end{itemize}

Hence in each step of our algorithm, we either increase social welfare or we increase the number of allocated goods without decreasing social welfare---thus we always make progress. This is similar to the approach used by Plaut and Roughgarden~\cite{TimPlaut18} 
to guarantee the existence of $\tfrac{1}{2}$-EFX\footnote{An allocation $X = (X_1,\ldots,X_n)$ is $\tfrac{1}{2}$-EFX if for any two agents $i, j$: $v_i(X_i) \ge \frac{1}{2}\cdot v_i(X_j\setminus\{g\})$ for all $g \in X_j$.} when agents have subadditive valuations. We now outline how we ensure that one of cases (i) and (ii) has to happen.

For simplicity of exposition, we assume the envy-graph corresponding to our starting EFX allocation $X$ has a single source $s$. Add $g$ to $s$'s bundle: if nobody envies $s$ up to any good then we are in an easy case as we have an EFX allocation on $M_0 \cup \{g\}$. In this case, we ``decycle'' the envy-graph (if cycles are created) and continue. Observe that swapping bundles along a cycle in the envy-graph increases social welfare.

\medskip

\noindent{\bf Most Envious Agent.}
Assume now that there are one or more agents who envy $s$ up to any good after $g$ is allocated to $s$. To resolve this, we introduce the concept of the \emph{most envious agent}. Let $i$ be an agent who envies $s$ up to any good, so $v_i(X_i) < v_i(S')$ for some $S' \subset X_s\cup\{g\}$. Let $S_i$ be any inclusion-wise minimal subset of $X_s\cup\{g\}$ such that $v_i(X_i) < v_i(S_i)$ (break ties arbitrarily). So for any $T \subset S_i$, we have $v_i(X_i) \ge v_i(T)$. 

\begin{itemize}
\item An agent $i$ such that $v_i(X_i) < v_i(S_i)$ for some  $S_i \subset X_s\cup\{g\}$ and no strict subset of $S_i$ is envied by {\em any} agent will be called the \emph{most envious agent} of $X_s\cup\{g\}$ (break ties arbitrarily). 
\end{itemize}

Let $t$ be the most envious agent of $X_s\cup\{g\}$.
The crucial observation is that no agent envies $S_t$ {\em up to any good}---otherwise it would contradict $S_t \subset X_s\cup\{g\}$ being an ``inclusion-wise minimal envied set'', i.e., no proper subset of $S_t$ is envied by any agent. Recall the assumption that $s$ is the only source, so there is a path $s \rightarrow i_1 \rightarrow \cdots \rightarrow i_{k-1} \rightarrow t$ in the envy-graph. We do a leftwise shift of bundles along this path: so $s$ gets $i_1$'s bundle, and for $1 \le r \le k-1$: $i_r$ gets $i_{r+1}$'s bundle (where $i_k = t$), and finally $t$ gets $S_t$. The goods in $X_s \cup \{g\} \setminus S_t$ are thrown back into the pool of unallocated goods.

Observe that every agent in this path is strictly better-off now than in the allocation $X$ and no other agent is worse-off. Moreover, by the definition of $S_t$, there is no agent envying any other agent up to any good. Thus we have a desired EFX allocation $X'$. When there are multiple sources, we can adapt this technique {\em provided} there are enough unallocated goods; in particular, the number of unallocated goods must be at least the number of sources in the envy-graph. We describe this in detail in Section~\ref{sec:main}.

We would like to contrast the above approach with other EFX algorithms~\cite{TimPlaut18,CaragiannisGravin19}. The $\tfrac{1}{2}$-EFX algorithm by Plaut and Roughgarden \cite{TimPlaut18} either merges $g$ (the new good) with an existing bundle or allocates the singleton set $\{g\}$ to an agent. The EFX-with-charity algorithm by Caragiannis et al.\ \cite{CaragiannisGravin19} takes an allocation of maximum Nash social welfare as input and then permanently removes some goods from the instance. We regard the notion of ``most envious agent'' that shows a
natural way of breaking up a bundle to preserve envy-freeness up to any good as one of the innovative contributions of our work.

\medskip

\noindent{\bf Our Other Results.}
Regarding our result with approximate MMS guarantee, if the number of unallocated goods in our EFX allocation is large, then the number of sources also has to be large: these are unenvied agents. Moreover, no agent envies the set of unallocated goods. Suppose for now that $|P| = n-1$. This means every agent is a source. So no agent envies the bundle of any other agent and also the set of unallocated goods. Thus for each agent $i$, we have:
\begin{eqnarray*}
  v_i(X_i) \ \geq \ \frac{v_i(M)}{n+1} \ & \geq & \ (1 + 1/n)^{-1} \cdot \frac{v_i(M)}{n} \\
  & \ge  & \ (1 - 1/n)\cdot\MMS_i(n,M),
\end{eqnarray*}  
where the constraint that $v_i(M)/n \ge \MMS_i(n,M)$ holds for additive valuations. We show our result for approximate-MMS allocation and our improved bound for approximate-GMMS allocation in Section~\ref{sec:additive-valuations}.

\subsection{Related Work}
Fair division of divisible resources is a classical and well-studied subject starting from the 1940's~\cite{Steinhaus48}. Fair division of indivisible goods among competing agents is a young and exciting topic with recent work on EF1 and EFX allocations~\cite{CaragiannisKMP016,BKV18,TimPlaut18,BCFIMPVZ19,CaragiannisGravin19}, approximate maximin share allocations~\cite{budish2011combinatorial,BL16,AMNS17,BK17,KPW18,GhodsiHSSY18,JGargMT19}, and approximation algorithms for maximizing Nash social welfare and generalizations~\cite{CG15,CDGJMVY17,ChaudhuryCG0HM18,AOSS17,GHM18,AMOV18}. 

As mentioned earlier, Caragiannis et al.\ \cite{CaragiannisKMP016} introduced the notion of EFX and it is now known that EFX allocations always exist for three agents with additive valuations~\cite{CGM20}. Whether EFX allocations always exist with general valuations or with a larger number of agents is an enigmatic open problem.  It was shown in \cite{CaragiannisKMP016} that there always exists an EF1 allocation that is also Pareto-optimal\footnote{An allocation $X = \langle X_1,\ldots,X_n\rangle$ is Pareto-optimal if there is no allocation $Y = \langle Y_1,\ldots,Y_n\rangle$ where $v_i(Y_i) \ge v_i(X_i)$ for all $i \in [n]$ and $v_i(Y_j) > v_i(X_j)$ for some $j$.} 
and Barman et al.\ in \cite{BKV18} showed a pseudo-polynomial time algorithm to compute such an allocation.

\smallskip

\noindent{\bf Applications.}
Fair division of goods or resources occurs in many real-world scenarios and this is demonstrated by the popularity of the website Spliddit (\url{http://www.spliddit.org}) that implements mechanisms for fair division where users can log in, define what needs to be divided, and enter their valuations. This website guarantees an EF1 allocation that is also Pareto-optimal and since its launch in 2014, it has been used tens of thousands of times~\cite{CaragiannisKMP016}. We refer to \cite{GP14,TimPlaut18} for details on the diverse applications for which Spliddit has been used: these range from rent division and taxi fare division to credit assignment for an academic paper or group project. Another such website is Fair Outcomes, Inc. (\url{http://www.fairoutcomes.com}). An interesting application is also {\em Course Allocate} used at Wharton School that guarantees certain fairness properties to allocate courses among students~\cite{TimPlaut18}.

\subsection{Improvements with respect to the SODA version}
In the conference version of the paper~\cite{CKMS20}, we gave a pseudo-polynomial time algorithm to determine an EFX allocation with bounded charity when agents have ``gross-substitute'' valuations. In this version we show a pseudo-polynomial time algorithm to determine an EFX allocation with bounded charity even when agents have general valuations. This is realized with a relaxed definition of the ``most envious agent'' (see Definition~\ref{mostenviousagent-definition}) that helps us to implement the \emph{minimal-envied-subset} oracle efficiently (see Subsection~\ref{minimum-envied-subset-finding}).

\section{Existence of an EFX-Allocation with Bounded Charity}
\label{sec:main}

We prove our main result on EFX-with-bounded-charity allocations in this section. We will define three {\em update rules}. Each update rule takes a pair $(X,P)$ consisting of an allocation $X$ and a set $P$ of unallocated goods (we will call $P$ the {\em pool}) and returns a modified pair $(X',P')$. 

Each application of an update rule will ensure that either (i)~the \emph{social welfare} $\phi(X) = \sum_{i \in [n]}v_i(X_i)$ of the current allocation increases or (ii)~the size of the pool decreases and the social welfare 
does not decrease, so  $|P'| < |P|$ in this case.
Hence the update process will terminate. The overall structure of the algorithm is given in Algorithm~\ref{main algorithm}.

\algrenewcommand{\algorithmiccomment}[1]{$\Rightarrow$ #1}
  
\begin{algorithm}[h]
  \begin{algorithmic}[1]
    \Statex {\textbf{Postcondition}: \parbox[t]{0.8\textwidth}{$X$ is EFX, $\abs{P} < n$ and $v_i(P) \le v_i(X_i)$ for all $i \in [n]$.}}
    \smallskip
    \State $X_i \leftarrow \emptyset$ for $i \in [n]$; $P \leftarrow M$;
    \While{one of the update rules shown in Algorithm~\ref{Update Rules} is applicable}
    \Statex \textbf{Invariant:} \parbox[t]{0.8\textwidth}{$X$ is EFX and the envy-graph $G_X$ is acyclic}
    \State Let $U_\ell$ be an applicable update rule;
    \State $(X,P) \leftarrow U_\ell(X,P)$;   
    \State Decycle the envy-graph; 
    \EndWhile
  \end{algorithmic}
  \caption{Algorithm for Computing an EFX-Allocation}\label{main algorithm}
\end{algorithm}

In order to define our update rules, we need the concepts of {\em envy-graph} and the {\em most envious agent} for a bundle of goods. These were discussed in Section~\ref{sec:techniques} and we formally define them below. 

\begin{definition}
  The \emph{envy-graph} $G_X$ for an allocation $X = \langle X_1,X_2, \ldots ,X_n \rangle$ has the set of agents as vertices and there is a directed edge from agent $i$ to agent $j$ if and only if $v_i(X_i) < v_i(X_j)$. 
\end{definition}

The notion of envy-graph was introduced in~\cite{LiptonMMS04} and it is well-known that cycles can be removed from the envy-graph without destroying desirable properties (see Lemma \ref{decylcifying}). Thus we can maintain $G_X$ as a DAG. 

For each agent $s$, we define the \emph{reachability component} $C(s)$ as all agents reachable from $s$ in the envy-graph and the \emph{sources} of the envy-graph as the vertices with indegree zero. 

For ease of notation, we will use $B \setminus g$ and $B \cup g$ to denote $B \setminus \{g\}$ and $B \cup \{g\}$, respectively. 

\begin{lemma}
\label{decylcifying}
 Let $i_0 \rightarrow i_1 \rightarrow \cdots \rightarrow i_{k-1} \rightarrow i_0$ be a cycle in the envy-graph. Consider the allocation $X'$ where 
$X'_{i_{\ell}} = X_{i_{\ell + 1}}$ (indices are modulo $k$) for $\ell \in \{0,\ldots,k-1\}$ and $X'_j = X_j$ for $j \notin \{i_0,\ldots,i_{k-1}\}$. If $X$ is EFX, then $X'$ is also EFX. Moreover, $\phi(X') > \phi(X)$. 
\end{lemma}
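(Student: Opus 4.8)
The statement has two parts: (a) if $X$ is EFX then the cycle-swapped allocation $X'$ is EFX, and (b) $\phi(X') > \phi(X)$. I would prove (b) first because it is the quick part and it also clarifies why the swap makes sense. For each $\ell \in \{0,\dots,k-1\}$, the edge $i_\ell \to i_{\ell+1}$ in the envy-graph means $v_{i_\ell}(X_{i_\ell}) < v_{i_\ell}(X_{i_{\ell+1}}) = v_{i_\ell}(X'_{i_\ell})$, so every agent on the cycle strictly increases her value; agents off the cycle keep the same bundle and hence the same value. Summing over all agents gives $\phi(X') > \phi(X)$ (the inequality is strict since the cycle is nonempty, $k \ge 2$).

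For (a), I need to check that no agent $p$ EFX-envies any agent $q$ under $X'$, i.e. $v_p(X'_p) \ge v_p(X'_q \setminus g)$ for every $g \in X'_q$. The key point is that the multiset of bundles $\{X'_1,\dots,X'_n\}$ equals the multiset $\{X_1,\dots,X_n\}$ — the swap just permutes ownership — so for any agent $q$ there is an agent $\pi(q)$ with $X'_q = X_{\pi(q)}$. Fix $p$ and $q$ and let $g \in X'_q = X_{\pi(q)}$. I split into two cases according to whether $p$ is on the cycle. If $p \notin \{i_0,\dots,i_{k-1}\}$, then $X'_p = X_p$, and EFX of $X$ applied to the pair $(p, \pi(q))$ gives $v_p(X_p) \ge v_p(X_{\pi(q)} \setminus g) = v_p(X'_q \setminus g)$, as desired. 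If $p = i_\ell$ for some $\ell$, then $X'_p = X_{i_{\ell+1}}$ and we have $v_p(X'_p) = v_p(X_{i_{\ell+1}}) > v_p(X_{i_\ell}) \ge v_p(X_{\pi(q)} \setminus g) = v_p(X'_q \setminus g)$, where the middle inequality is the envy edge $i_\ell \to i_{\ell+1}$ and the last is EFX of $X$ for the pair $(i_\ell, \pi(q))$. In both cases the EFX inequality holds, so $X'$ is EFX.

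The only mild subtlety — and the step I would be most careful about — is making the "permuted multiset of bundles" argument precise and checking it handles the edge case where one of the bundles involved is $X'_p$ itself, i.e. when $\pi(q) = p$ (or more generally when $q$ lies on the cycle). This is not actually an obstacle: EFX is vacuously about pairs, and even the "self" comparison $v_p(X'_p) \ge v_p(X'_p \setminus g)$ holds by monotonicity of $v_p$; but it is worth a sentence so the reader sees the case analysis is exhaustive. No additional ideas are needed beyond (i) the cycle edges give strict value increases for agents on the cycle, and (ii) the set of bundles is preserved as a multiset, so every EFX inequality for $X'$ reduces to one for $X$ (possibly combined with one cycle-edge inequality).
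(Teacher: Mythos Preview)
Your proof is correct and follows essentially the same approach as the paper: both use that every agent's value weakly increases (strictly on the cycle) and that the multiset of bundles is preserved, so every EFX inequality for $X'$ reduces to one for $X$. The paper's argument is marginally more streamlined in that it avoids your case split on whether $p$ lies on the cycle by directly chaining $v_p(X'_q \setminus g) = v_p(X_{\pi(q)} \setminus g) \le v_p(X_p) \le v_p(X'_p)$ for all $p$ at once.
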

\begin{proof} Consider any agent $i$. We have $v_i(X'_i) \ge v_i(X_i)$ with strict inequality if $i$ lies on the cycle. So $\sum_{i \in [n]}v_i(X'_i) > \sum_{i \in [n]}v_i(X_i)$. Thus $\phi(X') > \phi(X)$. 

Since $X'$ is just a permutation of $X$, for any agent $j$ there exists some agent $j'$ such that $X'_j=X_{j'}$. Therefore, since $X$ is EFX, for any good $g \in X_{j'}$ (or equivalently $X'_j$) we have $v_i(X'_j \setminus g) = v_i(X_{j'} \setminus g) \le v_i(X_i) \le v_i(X'_i)$. Thus $X'$ is also EFX. 
\end{proof}




Let $S \subseteq M$. Suppose there exists an agent that considers $S$ more valuable than her own bundle. Then we will call $S$
an {\em envied} set. The following definition formalizes the notion of the ``most envious agent''.

\begin{definition}
\label{mostenviousagent-definition}
Let $Z$ be an \emph{inclusion-wise minimal  envied subset} of $S$, i.e., 
(1)~there exists an agent $i$ such that $v_i(Z) > v_i(X_i)$ and (2)~for all $j \in [n]$ we have $v_j(X_j) \ge v_j(Z')$ for all
$Z' \subset Z$.
The agent $i$ (ties broken arbitrarily) will be called the {\em most envious agent} of $S$.
\end{definition}

%
%
%
%

We are now ready to present our three update rules $U_0$, $U_1$, and $U_2$, in Algorithm~\ref{Update Rules}.

\begin{algorithm*}[ht!]
  \begin{algorithmic}[1]
    \Function{$U_0$}{allocation $X$, pool $P$}
    \Statex \textbf{Precondition:} \parbox[t]{0.8\textwidth}{There is a good $g \in P$ and an agent $i$ such that allocating $g$ to $i$ results in an EFX allocation. }\smallskip
    \State Allocate $g$ to $i$, i.e., $X'_i \leftarrow X_i \cup g$, $P' \leftarrow P \setminus g$, and $X'_j = X_j$ for $j \not= i$.
    \State \textbf{return} $(X',P')$.
    \EndFunction\medskip

    \Function{$U_1$}{allocation $X$, pool $P$}
     \Statex \textbf{Precondition}: \parbox[t]{0.8\textwidth}{There is an agent $k$ such that $v_k(P) > v_k(X_k)$.}
     \State Let $i$ be the most envious agent of $P$; let $Z$ be an inclusion-wise minimal envied subset 

     of $P$
     such that $i$ satisfies $v_i(Z) > v_i(X_i)$.
     \State Set $X'_i = Z$ and $X'_j = X_j$ for $j \not= i$.
     \State  Set $P' = X_i \cup (P \setminus Z)$. 
     \State \textbf{return} $(X',P')$.
     \EndFunction\medskip

     \Function{$U_2$}{allocation $X$, pool $P$}
     \Statex \textbf{Precondition}: \parbox[t]{0.8\textwidth}{There is an $\ell \ge 1$, distinct goods $g_0$, $g_1$, \dots , $g_{\ell -1 }$ in $P$, distinct sources $s_0$, $s_1$, \dots, $s_{\ell-1}$ of $G_X$ and distinct agents $t_1$, $t_2$,\dots , $t_{\ell}$ such that $t_i \in C(s_i)$ and $t_{i+1}$ is the most envious agent of $X_{s_{i}} \cup g_{i}$ for $0 \le i \le \ell-1$ (indices are to be interpreted modulo $\ell$). }\smallskip 
     \State Let  $Z_{i}$ be an inclusion-wise minimal envied subset of $X_{s_{i}} \cup {g_{i}}$ such that 
     
     $v_{t_{i+1}}(Z_{i}) > v_{t_{i+1}}(X_{t_{i+1}})$ for $0 \le i \le \ell - 1$.
     \State Set $P' = (P \setminus \cup_{i=0}^{\ell}\{g_i\}\bigcup_{i = 0}^{\ell-1}((X_{s_i} \cup g_i) \setminus Z_i))$.
     \State Let $u_0^{i} \rightarrow \cdots \rightarrow u_{m_i}^{i}$ be the path of length $m_i$ from $s_i = u_0^{i}$ to $t_i= u_{m_i}^{i}$

     in $C(s_i)$ for $0 \le i \le \ell - 1$. 
     \State Set $X'_{u^{i}_k} = X_{u^{i}_{k+1}}$ for all $k \in \{0,\dots,m_i-1\}$ and all $i \in \{0,\ldots,\ell-1\}$. 
     \State Set $X'_{t_{i}} = Z_{i-1}$ for all $i \in \{1,\ldots,\ell\}$. 
     \State Set $X'_j = X_j$ for all other $j$.
     \State \textbf{return} $(X',P')$.
     \EndFunction
    \end{algorithmic}
  \caption{The Update Rules}
  \label{Update Rules}
\end{algorithm*}

\smallskip

\noindent{\bf Rule $\mathbf{U_0}:$} Rule $U_0$ is the easiest of the update rules. It is applicable whenever adding a good from the pool to some source of $G_X$ does not destroy the EFX-property (see Algorithm \ref{Update Rules}).

\begin{lemma}[Rule $U_0$]\label{Rule U0}\mbox{}
	\begin{itemize}
		\item[(a)] Rule $U_0$ returns an EFX allocation. An application of the rule does not decrease social welfare and decreases the size of the pool. 
		\item[(b)] If rule $U_0$ is not applicable then for any source $i$ of $G_X$ and good $g \in P$, there will be an agent $j \not= i$ such that $v_j(X_i \cup g) > v_j(X_j)$. In particular, an inclusion-wise minimal envied subset of $X_i \cup g$ has size at most $\abs{X_i}$.
	\end{itemize}
\end{lemma}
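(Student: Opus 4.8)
Part (a) should follow directly from the precondition of $U_0$ together with monotonicity, so I would dispatch it first. The rule sets $X'_i = X_i \cup g$ for the pair $(i,g)$ witnessing its precondition, and that precondition literally asserts the resulting allocation is EFX, so the first claim of (a) is immediate. For social welfare, only agent $i$'s bundle changes and $v_i(X_i \cup g) \ge v_i(X_i)$ by monotonicity, so $\phi$ does not decrease; and $\abs{P'} = \abs{P} - 1 < \abs{P}$, so the pool shrinks. Nothing beyond invoking monotonicity is needed here.

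For part (b), fix a source $i$ of $G_X$ and a good $g \in P$, and assume $U_0$ is not applicable, so the allocation $X'$ obtained by giving $g$ to $i$ (everyone else unchanged) is not EFX. The first key step is to pin down where the EFX violation can be. I would observe that $i$ itself cannot be the envious party: for any $j \neq i$ and $h \in X_j$ we have $v_i(X'_i) = v_i(X_i \cup g) \ge v_i(X_i) \ge v_i(X_j \setminus h)$, using monotonicity and that $X$ is EFX; and every pair of agents avoiding $i$ is unchanged from $X$, hence still fine. Therefore the violation must be that some agent $j \neq i$ envies $X'_i = X_i \cup g$ up to some good, i.e.\ there is $h \in X_i \cup g$ with $v_j\bigl((X_i \cup g) \setminus h\bigr) > v_j(X_j)$.

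The second step is a short case analysis on $h$, and this is the place where the hypothesis that $i$ is a \emph{source} (rather than an arbitrary agent) is essential. If $h = g$, then $(X_i \cup g)\setminus h = X_i$ (note $g \notin X_i$ since $g \in P$), so $v_j(X_i) > v_j(X_j)$, i.e.\ $j$ envies $i$, contradicting that $i$ is a source. Hence $h \in X_i$, and $(X_i \cup g)\setminus h = (X_i \setminus h)\cup g \subseteq X_i \cup g$, so monotonicity upgrades ``envy up to a good'' into envy of the whole set: $v_j(X_i \cup g) \ge v_j\bigl((X_i\setminus h)\cup g\bigr) > v_j(X_j)$, which is exactly the asserted agent $j$. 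For the ``in particular'' clause, note that the proper subset $(X_i \setminus h)\cup g$ of $X_i \cup g$ is itself envied (by $j$) and has size $\abs{X_i}$; hence an inclusion-wise minimal envied subset of $X_i \cup g$ cannot equal $X_i \cup g$, so it has size at most $\abs{X_i}$.

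I expect the whole argument to be routine; the only point requiring a little care is the case $h = g$ in (b), which is the sole place the ``source'' assumption is used and which incidentally rules out the degenerate possibility $X_i = \emptyset$ (if $X_i = \emptyset$, assigning $\{g\}$ to the source $i$ is automatically EFX, so $U_0$ would be applicable and the hypothesis of (b) would fail). Everything else is bookkeeping with monotonicity and the definition of the envy-graph.
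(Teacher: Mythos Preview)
Your proof is correct and follows essentially the same approach as the paper's: for (a) you invoke the precondition and monotonicity exactly as the paper does, and for (b) you locate the EFX violation at some $j \neq i$ envying $X_i \cup g$ up to a good and then use that proper subset to bound the minimal envied subset. The paper's argument is terser (it calls the first sentence of (b) ``obvious'' and does not split on whether the removed good equals $g$); in particular, your case analysis and the invocation of the source hypothesis are not actually needed, since $(X_i \cup g)\setminus h$ is an envied subset of size $\abs{X_i}$ regardless of whether $h = g$ or $h \in X_i$.
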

\begin{proof} 
	
	The first part of  a) follows directly from the precondition of the rule. The second part holds since the valuations are monotone and because $|P'| = |P| - 1$.
	
	The first sentence in part b) is obvious. We come to the second sentence. Since adding $g$ to $X_i$ destroys the EFX-property, there must be some $g' \in X_i \cup g$ and some $j \in [n]$ such that $v_j(X_i \cup g \setminus g') > v_j(X_j)$ for some $j \in [n]$. Thus an inclusion-wise minimal envied subset of $X_i \cup g$ has size at most $\abs{X_i}$.
\end{proof}

\smallskip

\noindent{\bf Rule $\mathbf{U_1}:$} Rule $U_1$ is applicable whenever there is an agent that values the pool higher than her current bundle (see Algorithm \ref{Update Rules}). 

\begin{lemma}[Rule $U_1$] Rule $U_1$ increases the social welfare and returns an EFX allocation. 
\end{lemma}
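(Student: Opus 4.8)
The plan is to treat the two assertions separately, exploiting the fact that an application of $U_1$ only replaces the bundle $X_i$ of the most envious agent $i$ of $P$ by the set $Z$, and leaves every other bundle untouched: $X'_j = X_j$ for all $j \ne i$ and $X'_i = Z$.

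\emph{Social welfare.} This is immediate. By the choice of $Z$ in the rule, $v_i(Z) > v_i(X_i)$, so $\phi(X') = \phi(X) - v_i(X_i) + v_i(Z) > \phi(X)$. (Only the $i$-th summand of $\phi$ changes.)

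\emph{EFX.} Here I would fix two agents $a,b$ and a good $g \in X'_b$ and verify $v_a(X'_a) \ge v_a(X'_b \setminus g)$ by a case analysis on whether $a$ and $b$ equal $i$:
\begin{itemize}
\item If $a \ne i$ and $b \ne i$, then $X'_a = X_a$, $X'_b = X_b$, and the inequality is just the EFX property of $X$.
\item If $a = i$ and $b \ne i$, then $X'_b = X_b$; since $X$ is EFX and $v_i(Z) > v_i(X_i)$ we get $v_i(X'_i) = v_i(Z) > v_i(X_i) \ge v_i(X_b \setminus g)$.
\item If $a \ne i$ and $b = i$, then $X'_b = Z$ and $g \in Z$, so $Z \setminus g$ is a proper subset of $Z$. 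By clause (2) of Definition~\ref{mostenviousagent-definition} (the minimality property of an inclusion-wise minimal envied subset), no proper subset of $Z$ is envied by any agent, hence $v_a(X'_a) = v_a(X_a) \ge v_a(Z \setminus g) = v_a(X'_b \setminus g)$.
\item If $a = b = i$, then $v_i(X'_i) = v_i(Z) \ge v_i(Z \setminus g)$ by monotonicity of $v_i$.
\end{itemize}

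I expect no real obstacle: the social-welfare part is one line, and the only step in the EFX part that is not pure monotonicity or a restatement of the EFX property of the previous allocation is the third case, where the whole point of the ``most envious agent'' / ``inclusion-wise minimal envied subset'' construction is invoked — precisely because $Z$ was chosen minimal, removing any single good from the new bundle of $i$ makes it unenvied by everyone, which is exactly the condition needed to re-establish EFX after the reallocation.
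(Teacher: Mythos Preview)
Your proof is correct and follows essentially the same approach as the paper: the social-welfare increase is argued identically, and the EFX verification is the same case analysis on whether each of the two agents coincides with $i$, invoking the EFX property of $X$ for unchanged bundles and the inclusion-wise minimality of $Z$ for the new bundle. The only (cosmetic) difference is that you separate out the trivial case $a=b=i$ explicitly via monotonicity, whereas the paper folds its cases a bit more loosely.
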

\begin{proof} Since there is an agent that values the pool higher than her own bundle, there is an inclusion-wise minimal envied
  subset $Z$ of $P$ and an agent $i$ such that $v_i(X_i) < v_i(Z)$.
  Let $X'$ be the allocation defined in Algorithm~\ref{Update Rules}, line~7. Then $v_i(X'_i) > v_i(X_i)$ and $v_j(X'_j) = v_j(X_j)$ for $j \not= i$. Thus $\phi(X') > \phi(X)$. It remains to show that the allocation $X'$ is EFX, i.e., for every pair of agents $j$ and $k$ and any good $g \in X'_k$, we have $v_j(X'_k \setminus g) \le v_j(X'_j)$.

  Since $X$ is EFX, this is obvious if neither $j$ nor $k$ is equal to $i$. If $j = i$, then $v_i(X'_i) > v_i(X_i) \geq v_i(X_k \setminus g) = v_i(X'_k \setminus g)$ for all $g \in X'_k$ (or equivalently $g \in X_k$). Finally, we consider $k=i$. Since $Z$ is an
  inclusion-wise minimal envied subset of $P$, we have $ v_j(X'_j)=v_j(X_j) \geq v_j(Z \setminus g) = v_j(X'_i \setminus g)$ for any $g \in Z$,
  where $v_j(X_j) \geq v_j(Z \setminus g)$ follows from the inclusion-wise minimality of $Z$ as an envied set. 
\end{proof}

\noindent{\bf Rule $\mathbf{U_2}:$} Rule $U_2$ is our most complex rule. It is applicable if for some $\ell \ge 1$, there are  distinct goods $g_0$, $g_1$, \dots , $g_{\ell -1 }$ in $P$, distinct sources $s_0$, $s_1$, \dots, $s_{\ell-1}$ of $G_X$ and distinct agents $t_1$, $t_2$,\dots , $t_{\ell}$ (indices are to be interpreted modulo $\ell$) such that for each $i$: (1)~$t_i$ is the most envious agent of $X_{s_{i-1}} \cup g_{i-1}$ and (2)~$t_i$ is reachable from $s_i$. We first show that rule $U_2$ is applicable if rule $U_0$ is not applicable and the pool contains at least $n$ goods.

\begin{figure}[h]
\centerline{\resizebox{0.38\textwidth}{!}{\input{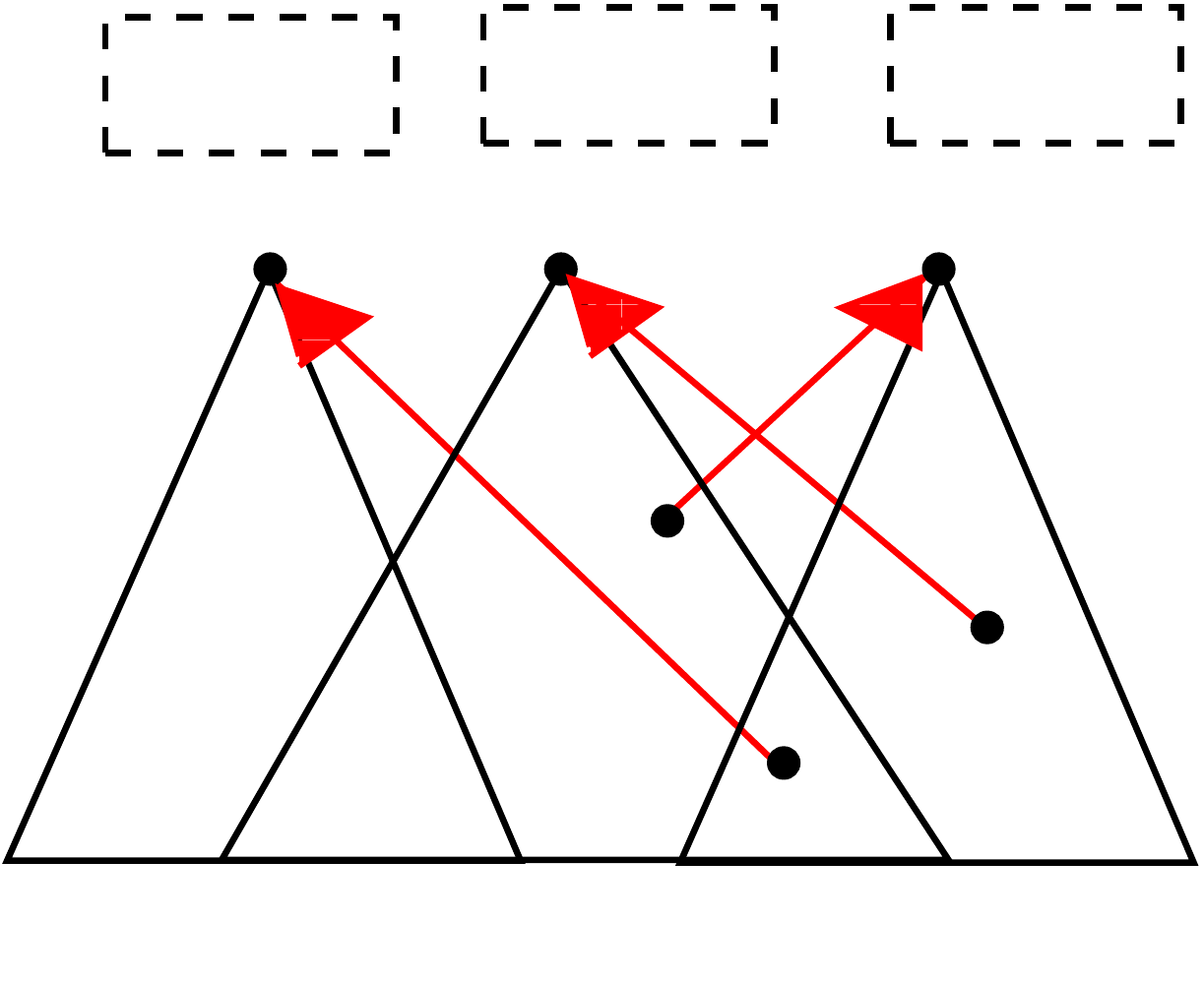_t}}}
\caption{We have $t_i$ as the most envious agent of $X_{s_{i-1}} \cup g_{i-1}$. Moreover, $t_i \not\in C(s_0) \cup \cdots \cup C(s_{i-1})$ for $i = 1,2$ and $t_3 \in C(s_0) \cup \cdots \cup C(s_{2})$. Note that $j = 1$ is the largest index such that $t_3 \in C(s_j)$. The cycle is defined by $s_1$, $s_2$, $g_1$, $g_2$, $t_2$ and $t_3$. }
\label{Rule2}
\end{figure}

\begin{lemma}
\label{cycle-existence}
 If $\abs{P} \ge n$ and rule $U_0$ is not applicable then there is an $\ell \ge 1$, distinct goods $g_0,g_1, \dots ,g_{\ell-1}$ in $P$, distinct sources $s_0,s_1, \dots ,s_{\ell-1}$ of $G_X$, and distinct agents $t_1,t_1,\dots , t_{\ell}$ such that
 $t_i \in C(s_i)$ and $t_i$ is the most envious agent of $X_{s_{i-1}} \cup {g_{i-1}}$ for $i \in \{0,\ldots,\ell-1\}$ (indices are modulo $\ell$). 
\end{lemma}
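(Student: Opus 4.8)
The plan is to grow the required configuration greedily, one source--good--agent triple at a time, and then extract a cycle from it.

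I would first record two preliminary facts. Since rule $U_0$ is not applicable, Lemma~\ref{Rule U0}(b) tells us that for every source $s$ of $G_X$ and every $g\in P$ the set $X_s\cup g$ is envied, so its \emph{most envious agent} (Definition~\ref{mostenviousagent-definition}) exists. And since $G_X$ is acyclic, every agent lies in $C(s)$ for at least one source $s$.

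Then I would run the following greedy process. Start with an arbitrary source $s_0$ and arbitrary $g_0\in P$, and let $t_1$ be the most envious agent of $X_{s_0}\cup g_0$. Having built distinct sources $s_0,\dots,s_{m-1}$, distinct goods $g_0,\dots,g_{m-1}\in P$ and agents $t_1,\dots,t_m$ with $t_i$ the most envious agent of $X_{s_{i-1}}\cup g_{i-1}$: if $t_m\in C(s_0)\cup\cdots\cup C(s_{m-1})$ then stop; otherwise pick a source $s_m$ with $t_m\in C(s_m)$ (it exists, and automatically $s_m\notin\{s_0,\dots,s_{m-1}\}$ since $t_m$ is in none of $C(s_0),\dots,C(s_{m-1})$), pick a fresh good $g_m\in P\setminus\{g_0,\dots,g_{m-1}\}$, and let $t_{m+1}$ be the most envious agent of $X_{s_m}\cup g_m$. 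Since the $s_i$ are distinct sources and there are at most $n$ of them, the process halts after some step $k\le n$, having used the at most $n$ goods $g_0,\dots,g_{k-1}$, which is legal because $|P|\ge n$. (If it ever reaches the stage where $s_0,\dots,s_{m-1}$ are all the sources, then $t_m$, being reachable from some source, forces termination at step $m$.) Finally let $j\le k-1$ be the \emph{largest} index with $t_k\in C(s_j)$, and output $\ell=k-j$ with sources $s_j,\dots,s_{k-1}$, goods $g_j,\dots,g_{k-1}$ and agents $t_{j+1},\dots,t_k$, reindexed cyclically so that $t_k$ wraps around to $C(s_j)$.

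It then remains to verify the three demanded properties. The ``most envious agent'' conditions and the memberships $t_i\in C(s_i)$ hold directly from the construction, using $j+i\le k-1$ for the non-wrap-around indices and the choice of $j$ for the wrap-around. Distinctness of the sources and of the goods is immediate. The only step that takes thought --- and the one I expect to be the crux --- is distinctness of $t_{j+1},\dots,t_k$: for $j+1\le m'<m\le k-1$ the fact that we did not stop at step $m$ gives $t_m\notin C(s_{m'})$ while $t_{m'}\in C(s_{m'})$, so $t_m\neq t_{m'}$; and maximality of $j$ gives $t_k\notin C(s_m)$ while $t_m\in C(s_m)$ for $j+1\le m\le k-1$, so $t_k\neq t_m$. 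Getting the index $j$ to play this double role --- simultaneously closing the cycle and separating the $t_i$'s --- is the delicate point; the rest is bookkeeping.
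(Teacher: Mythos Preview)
Your proof is correct and follows essentially the same approach as the paper: the same greedy sequence of source--good--agent triples, the same stopping rule when $t_k$ lands in an earlier $C(s_j)$, and the same choice of $j$ as the \emph{largest} such index to extract a simple cycle. Your distinctness argument for the $t_i$'s is slightly more explicit than the paper's (you separate the wrap-around case via maximality of $j$, whereas the paper separates it via $t_i\in C(s_j)$ and $t_k\notin C(s_j)$), but the content is identical.
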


\begin{proof} Since rule $U_0$ is not applicable, for every source $s$ of $G_X$ and every good $g \in P$, we have $v_j(S') > v_j(X_j)$ for some $S' \subset X_s \cup g$ and $j \in [n]$. Construct a sequence of triples $(s_i,g_i,t_{i+1})$, $i \ge 0$ defined as follows. Let $s_0$ be an arbitrary source of $G_X$ and $g_0$ be an arbitrary good in $P$. Assume we have defined $s_{i-1}$ and $g_{i-1}$. Let $t_{i}$ be the most envious agent of $X_{s_{i-1}} \cup {g_{i-1}}$.  If $t_i \not\in C(s_0) \cup \dots \cup C(s_{i-1})$, let $s_i$ be such that $t_i \in C(s_i)$. Also, let $g_i$ be a good in $P$ distinct from $g_0$ to $g_{i-1}$. 
If $t_{i} \in C(s_0) \cup \dots \cup C(s_{i-1})$ then stop the construction of the sequence and let $j$ be the maximum index such that $t_i \in C(s_j)$. Set $\ell = i - j$ and return $s_j,\dots,s_{i-1}$, $g_j,\dots,g_{i-1}$ and $t_{j+1},\dots,t_i$;\footnote{We took $j$ to be the maximum index such that $t_i \in C(s_j)$ so that the cycle defined by $s_j,\dots,s_{i-1}$, $g_j,\dots,g_{i-1}$ and $t_{j+1},\dots,t_i$ is simple.}  see Figure~\ref{Rule2} for an illustration.

  The construction is well-defined since $\abs{P} \ge n$ and hence we cannot run out of goods. The sources and goods are pairwise distinct by construction. The agents $t_1$ to $t_{i-1}$ are also distinct by construction. Finally, agent $t_{i}$ is distinct from any agent $t_{k}$ for $j < k <i$ since $t_i \in C(s_j)$ and by definition $t_{k} \notin (C(s_0) \cup \cdots \cup C(s_{k-1}))$ and so $t_{k} \notin C(s_j)$.
\end{proof}

For each $i$, let $u_0^{i} \rightarrow u_1^{i} \rightarrow \cdots \rightarrow u_{m_i}^{i}$ be the path of length $m_i$ from $s_i = u_0^{i}$ to $t_i = u_{m_i}^{i}$ in $C(s_i)$. Rule~$U_2$ assigns (i)~$X'_{u^{i}_k} = X_{u^{i}_{k+1}}$ for all $k \in \{0,\dots,m_i-1\}$ and all $i \in  \{0,\ldots,\ell-1\}$ and (ii)~$X'_{t_{i}} = Z_{i-1}$ for all $i \in \{1,\ldots,\ell\}$, where $Z_i$ is defined in Algorithm~\ref{Update Rules} (see line~12). For all other $j$, we have $X'_j = X_j$.

\begin{lemma}[Rule $U_2$] Rule $U_2$ increases social welfare and returns an EFX allocation. \end{lemma}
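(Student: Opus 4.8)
The plan is to verify the two claims of the lemma separately: first that social welfare strictly increases, and then that the resulting allocation $X'$ is EFX. Both follow the same template as the proofs of Rules $U_0$ and $U_1$, but the bookkeeping is heavier because of the cyclic structure.

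For the social welfare claim, I would argue that \emph{every} agent whose bundle changes becomes strictly better off, and no agent becomes worse off. There are two types of agents whose bundle changes. An agent of the form $u^i_k$ with $0 \le k \le m_i - 1$ receives $X_{u^i_{k+1}}$; since $u^i_k \to u^i_{k+1}$ is an envy edge in $G_X$, we have $v_{u^i_k}(X_{u^i_{k+1}}) > v_{u^i_k}(X_{u^i_k})$. An agent of the form $t_{i}$ (for $1 \le i \le \ell$) receives $Z_{i-1}$, and by the choice of $t_i$ as the most envious agent of $X_{s_{i-1}} \cup g_{i-1}$ together with the defining property of $Z_{i-1}$, we have $v_{t_i}(Z_{i-1}) > v_{t_i}(X_{t_i})$. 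One subtlety: the paths $u^0_\bullet, \dots, u^{\ell-1}_\bullet$ and the agents $t_1,\dots,t_\ell$ must be consistently partitioned so that each affected agent has a single well-defined new bundle — here one uses that $t_i = u^i_{m_i}$ is the endpoint of the $i$-th path, and that $Z_{i-1}$ is assigned to $t_i$, overriding the path-shift assignment $X'_{t_i} = X'_{u^i_{m_i}}$ (which is not made, since the shift on path $i$ only covers $k \le m_i - 1$). I should check the indices line up: on path $i$ the source $s_i = u^i_0$ gets $X_{u^i_1}$, internal vertices shift left, and the endpoint $t_i$ gets $Z_{i-1}$ from the \emph{previous} triple. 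All remaining agents keep their bundles. Hence $\phi(X') > \phi(X)$.

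For the EFX claim, I would take an arbitrary ordered pair of agents $(j,k)$ and a good $g \in X'_k$ and show $v_j(X'_k \setminus g) \le v_j(X'_j)$. Split into cases according to what $X'_k$ is. If $X'_k = X_{k'}$ for some agent $k'$ (this covers unchanged agents and the path-shifted agents $u^i_0,\dots,u^i_{m_i-1}$, whose new bundles are old bundles of other agents), then since $X$ is EFX, $v_j(X'_k \setminus g) = v_j(X_{k'} \setminus g) \le v_j(X_j) \le v_j(X'_j)$, using monotonicity-free fact that $v_j(X'_j) \ge v_j(X_j)$ for every $j$ (established in the social welfare part). If instead $X'_k = Z_{i-1}$ for some $i$, i.e. $k = t_i$, then the key point — exactly as highlighted in Section~\ref{sec:techniques} and used in Rule $U_1$ — is that $Z_{i-1}$ is an \emph{inclusion-wise minimal envied subset} of $X_{s_{i-1}} \cup g_{i-1}$, so for every $g \in Z_{i-1}$ and every agent $j$, $Z_{i-1} \setminus g$ is a proper subset of $Z_{i-1}$ and hence not envied: $v_j(Z_{i-1} \setminus g) \le v_j(X_j) \le v_j(X'_j)$. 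This settles every case.

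The main obstacle I anticipate is not any single inequality — each is a one-liner — but rather making the indexing airtight: confirming that the family of path-shift assignments together with the $t_i \leftarrow Z_{i-1}$ assignments and the ``$X'_j = X_j$ for all other $j$'' clause is genuinely a well-defined allocation (a permutation-with-substitution of a partition of $\cup_i X_i \cup \{\text{leftover goods}\}$), in particular that no agent is assigned two bundles and that $X'$ together with $P'$ (as defined in line~9 of Algorithm~\ref{Update Rules}) partitions $M$. This requires invoking Lemma~\ref{cycle-existence}'s guarantees that the $s_i$ are distinct sources, the $t_i$ are distinct, and the goods $g_i$ are distinct, plus noting that a source $s_i$ has indegree zero so it cannot coincide with any internal path vertex $u^{i'}_k$ for $k \ge 1$; the cyclic wrap-around (indices mod $\ell$) is what links the endpoint $t_i$ on path $i$ to the set $Z_{i-1}$ coming off path $i-1$. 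Once the allocation is confirmed well-defined, the EFX verification and the social-welfare monotonicity go through as sketched, and finally one notes (for use in the overall algorithm) that since social welfare strictly increases, Rule $U_2$ cannot be applied indefinitely. I would also remark that after the update the envy-graph may acquire cycles, which is harmless since Algorithm~\ref{main algorithm} decycles via Lemma~\ref{decylcifying} and decycling only increases social welfare.
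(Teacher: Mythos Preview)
Your proposal is correct and follows essentially the same approach as the paper: the social-welfare claim is argued by noting every agent on a path (including each $t_i$) strictly improves while all others are unchanged, and the EFX verification splits on whether $X'_k$ is an old bundle $X_{k'}$ (handled by EFX of $X$ together with $v_j(X'_j)\ge v_j(X_j)$) or one of the sets $Z_{i-1}$ (handled by inclusion-wise minimality). Your additional care about well-definedness of the assignment is welcome rigor that the paper leaves implicit, but it does not constitute a different route.
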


\noindent{\em Proof.}      We first observe that the valuations of the agents for their bundles have either increased or remained the same (since either the agents are left with their old bundles  or assigned bundles that they envied). In particular, the valuations of all the agents in 
$\bigcup_{i=0}^{\ell-1}\ \bigcup_{k=0}^{m_i} \{u^{i}_k\}$  are strictly larger, where 
the vertices $u^i_k$ are defined above. 
Thus $\phi(X')> \phi(X)$.
      
It remains to show that the allocation $X'$ is EFX, i.e., for every pair of agents $j$ and $k$ and any good $g \in X'_k$ we have $v_j(X'_k \setminus g) \le v_j(X'_j)$.  Let $T = \sset{t_1,t_2, \dots, t_{\ell}}$. For every agent $r \notin T$ we have $X'_{r} = X_{r'}$ for some $r'$. Now consider two cases depending on $k$:
     \begin{itemize}
         \item[--] \emph{$k \notin T$}$\colon$ Note that the agents' valuation for their current bundle (in $X'$) is at least as good as their valuation for their old bundle (in $X$). We have $v_j(X'_j) \geq v_j(X_j) \geq v_j(X_{k'} \setminus g) = v_j(X'_k \setminus g)$ for any $g \in X'_k$ (or equivalently, $g \in X_{k'}$).
         \item[--] \emph{$k \in T$}$\colon$ Let $k =t_i$ for some $i$. We have $v_j(X'_j) \geq v_j(X_j) \geq v_j(Z_{i-1} \setminus g)$ for any $g \in Z_{i-1}$ (by the inclusion-wise minimality of $Z_{i-1}$ as an envied set) and $v_j(Z_{i-1} \setminus g) = v_j(X'_{t_i} \setminus g) = v_j(X'_{k} \setminus g)$ for any $g \in X'_k$. \qed
         \end{itemize}

     We can now summarize. Let $V = \max_i v_i(M)$ be the maximum valuation of any agent and $\Delta = \min_i \min \{|v_i(T) - v_i(S)|: S,T \subseteq M \text{ and } v_i(S) \ne v_i(T)\}$ be the minimum difference between distinct valuations.
     Each application of rule $U_1$ or rule $U_2$ increases the social welfare by at least $\Delta$ and hence there can be no more than $n V/\Delta$ applications of these rules. Each application of rule $U_0$ decreases the size of the pool by one and hence there cannot be more than $m$ successive applications of this rule. We conclude that the number of iterations is at most $nmV/\Delta$. Thus we have shown the following theorem.

\begin{theorem}
\label{existence of EFX intro} 
For normalized and monotone valuations, there is always an allocation $X$ and a pool $P$ of unallocated goods such that 
\begin{itemize}
\item $X$ is EFX, 
\item $v_i(X_i) \geq v_i(P)$ for all agents $i$, and 
\item $|P|$ is less than the number of sources in the envy-graph; in particular, $|P| < n$.
\end{itemize}
Algorithm~\ref{main algorithm} determines such an allocation in at most $nmV/\Delta$ iterations.
\end{theorem}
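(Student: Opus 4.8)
The plan is to assemble Theorem~\ref{existence of EFX intro} directly from the properties of Algorithm~\ref{main algorithm} and the three update-rule lemmas already proved. The argument has two halves: first, termination with the stated iteration bound; second, verifying that the terminal pair $(X,P)$ satisfies the three bullet points.

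\textbf{Termination and iteration count.} First I would observe that the \textbf{While}-loop maintains the stated invariant: $X$ is EFX and $G_X$ is acyclic. This holds initially (empty allocation, empty envy-graph) and is preserved by each iteration because, by Lemma~\ref{Rule U0}(a) and the two subsequent lemmas, each of $U_0,U_1,U_2$ returns an EFX allocation, and the decycling step of line~6 preserves EFX-ness and acyclicity by Lemma~\ref{decylcifying}. Next I would bound the number of iterations exactly as sketched just above the theorem statement: each application of $U_1$ or $U_2$ strictly increases the social welfare $\phi(X)=\sum_{i\in[n]}v_i(X_i)$, and by Lemma~\ref{decylcifying} the decycling step only increases $\phi$ further, so $\phi$ is nondecreasing throughout and increases by at least $\Delta$ whenever $U_1$ or $U_2$ fires; since $\phi(X)\le\sum_i v_i(M)\le nV$ always, there are at most $nV/\Delta$ such applications. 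Each application of $U_0$ decreases $|P|$ by one and never increases it afterwards except possibly via $U_1$ or $U_2$; between two consecutive non-$U_0$ rule applications there can be at most $m$ applications of $U_0$. Combining, the total number of iterations is at most $nmV/\Delta$, and the loop halts.

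\textbf{The terminal allocation has the claimed properties.} When the loop exits, none of $U_0,U_1,U_2$ is applicable. EFX-ness of $X$ is immediate from the loop invariant. For the second bullet, $v_i(X_i)\ge v_i(P)$ for all $i$: if some agent $k$ had $v_k(P)>v_k(X_k)$, then the precondition of $U_1$ would be met and the loop would not have exited. For the third bullet I would argue by contradiction: suppose $|P|\ge n$. Since $U_0$ is inapplicable, Lemma~\ref{cycle-existence} applies and yields goods $g_0,\dots,g_{\ell-1}\in P$, distinct sources $s_0,\dots,s_{\ell-1}$ of $G_X$, and distinct agents $t_1,\dots,t_\ell$ with $t_i\in C(s_i)$ and $t_i$ the most envious agent of $X_{s_{i-1}}\cup g_{i-1}$ — which is exactly the precondition of $U_2$, contradicting inapplicability. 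Hence $|P|<n$. To get the sharper statement that $|P|$ is less than the number of sources in $G_X$, I would refine the same counting argument inside Lemma~\ref{cycle-existence}'s construction: the triples use one fresh source $s_i$ and one fresh good $g_i$ per step, and the construction can only be forced to stop (producing a $U_2$-cycle) when goods run out; so as long as the number of pool goods is at least the number of sources, a cycle is found. Therefore at termination $|P|$ is strictly smaller than the number of sources; since the number of sources is at most $n$, in particular $|P|<n$.

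\textbf{Where the difficulty lies.} The genuinely substantive content is packaged in the update-rule lemmas and in Lemma~\ref{cycle-existence}, which the theorem is permitted to cite; relative to those, the theorem's proof is bookkeeping. The one point that needs care is the refinement ``$|P|$ less than the number of sources'': one must check that the sequence-of-triples construction in Lemma~\ref{cycle-existence} never stalls for a reason other than exhausting the pool, i.e. that whenever $t_i\notin C(s_0)\cup\cdots\cup C(s_{i-1})$ there genuinely is an unused source containing $t_i$ (true, since every vertex lies in some reachability component of a source of a DAG) and an unused good (true precisely when fewer than the number-of-sources-many goods have been consumed). So the main obstacle is stating this counting cleanly; the rest follows the template sketched in the text.
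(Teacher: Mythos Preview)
Your proof is correct and follows essentially the same approach as the paper: use the update-rule lemmas to bound the number of $U_1/U_2$ applications by $nV/\Delta$ via the social-welfare potential and interleave at most $m$ consecutive $U_0$'s, then read off the three bullets from the inapplicability of $U_0$, $U_1$, $U_2$ at termination. In fact you go slightly beyond the paper's written argument by explicitly justifying the sharper claim that $|P|$ is less than the number of sources---the paper's Lemma~\ref{cycle-existence} is stated only under the hypothesis $|P|\ge n$, whereas your refinement (that $|P|\ge\#\text{sources}$ already suffices for the triple-construction not to run out of goods before a cycle is forced) is exactly what is needed and is implicit in that lemma's proof.
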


\subsection{Finding an inclusion-wise minimal envied subset}
\label{minimum-envied-subset-finding}
We now describe how to implement a {\em minimal-envied-subset} oracle efficiently, i.e., given an envied set $S \subseteq M$,
we need to efficiently find an inclusion-wise minimal subset $Z$ of $S$ that is envied. Algorithm~\ref{minimal-valuable-subset}
finds such a set $Z \subseteq S$ for monotone valuations.

\begin{algorithm}[h]
  \begin{algorithmic}[1]
    \Statex {\textbf{Precondition}: \parbox[t]{0.8\textwidth}{There is an agent that values $S \subseteq M$ more than her own bundle.}}
    \Statex {\textbf{Postcondition}: \parbox[t]{0.8\textwidth}{$Z \subseteq S$ such that there is some agent $i$ with $v_i(Z) > v_i(X_i)$ and for all $j \in [n]$ we have $v_j(X_j) \ge v_j(Z')$ for all $Z' \subset Z$.}}
    \smallskip
    \State $Z = S$;
    \For {every agent $i$}
    \For {every good $g \in Z$}
    \If {$v_i(X_i) < v_i(Z\setminus g)$}
    \State $Z = Z \setminus g$; 
    \EndIf
    \EndFor
    \EndFor
    \State Return $Z$.
  \end{algorithmic}
  \caption{Algorithm for finding an inclusion-wise minimal envied subset}
  \label{minimal-valuable-subset}
\end{algorithm}

In Algorithm~\ref{minimal-valuable-subset}, if the ``if condition'' is not invoked for any agent $i$ and good $g \in Z$, then $Z$ is an inclusion-wise minimal envied subset. Suppose agent~$i$ envies $Z\setminus g$ for some $g \in Z$. Then we continue our check with $Z \setminus g$ as the new $Z$.  Since the new $Z$ is a subset of the old $Z$, observe that for each agent $i \in [n]$ and each good $g \in Z$, our algorithm needs to check exactly once if $v_i(X_i) \ge v_i(Z \setminus g)$ or not. Thus our algorithm makes $nm$ value queries. We can conclude the following theorem.

\begin{theorem}
\label{thm:pseudo-poly}
For normalized and monotone valuations, we can determine an allocation $X$ and a pool of unallocated goods $P$ that satisfies the 3 conditions in
Theorem~\ref{existence of EFX intro} using at most $n^2m^2V/{\Delta}$ value queries.
\end{theorem}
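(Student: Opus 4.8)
The plan is to obtain the query bound by multiplying the iteration bound already established in Theorem~\ref{existence of EFX intro} by a bound on the number of value queries consumed by one pass through the main loop of Algorithm~\ref{main algorithm}. That theorem tells us the loop runs at most $nmV/\Delta$ times and produces an allocation meeting the three required conditions, so the task reduces to showing that a single iteration — refreshing the envy-graph, deciding which of $U_0$, $U_1$, $U_2$ applies, applying it, and decycling — can be carried out with $O(nm)$ value queries, the $nm$ being exactly the cost of one call to the minimal-envied-subset oracle of Algorithm~\ref{minimal-valuable-subset}.

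For the per-iteration count I would separate the graph-level work from the oracle calls. The envy-graph $G_X$ is determined by the answers to ``is $v_i(X_i)<v_i(X_j)$?'', so after a step it can be refreshed by re-querying $v_i(\cdot)$ only for the bundles that changed; extracting the sources and the reachability components $C(s)$, and the decycling of Lemma~\ref{decylcifying} (a permutation of bundles along a detected cycle), are then purely combinatorial and use no further queries. Checking the precondition of $U_1$ costs $O(n)$ queries, and applying $U_1$ is a single oracle call. The remaining work is the interleaved test of $U_0$-applicability and construction of the cyclic witness of Lemma~\ref{cycle-existence} that triggers $U_2$: for a chosen source $s$ and good $g\in P$, one checks whether $X_s\cup g$ is envied ($O(n)$ queries) and, if so, runs the oracle on it (at most $nm$ queries, by the analysis preceding the theorem) to obtain a minimal envied subset $Z$ together with an agent who envies it, i.e.\ the most envious agent in the sense of Definition~\ref{mostenviousagent-definition}. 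If $X_s\cup g$ is unenvied, or if $Z=X_s\cup g$, then adding $g$ to $s$ preserves EFX and $U_0$ is applied; otherwise the most envious agent $t$ either lies in $C(s)$, giving a length-one cycle and an immediate application of $U_2$, or lies outside, and we pass to the source reaching $t$ and a fresh good. Since the agents produced along this chain are pairwise distinct (Lemma~\ref{cycle-existence}), the chain closes after a bounded number of oracle calls, and once the sources $s_i$, goods $g_i$, agents $t_i$ and sets $Z_i$ are in hand, the reallocation performed by $U_2$ uses no additional queries.

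Putting the pieces together, each iteration is dominated by calls to Algorithm~\ref{minimal-valuable-subset}, each costing at most $nm$ value queries; multiplying by the $nmV/\Delta$ iterations of Theorem~\ref{existence of EFX intro} gives the claimed bound of $n^2m^2V/\Delta$. (If one wants to be scrupulous about the chain inside $U_2$, the several oracle calls of such an iteration can be charged against the $nV/\Delta$ bound on applications of $U_1$ and $U_2$ together with the fact that $\phi$ increases by at least $\Delta$ each time; this keeps the total within the stated order.)

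The step I expect to be the real obstacle is exactly this query accounting for rule $U_2$: one must verify that the construction of Lemma~\ref{cycle-existence} — discovering $U_0$-opportunities on the way, and otherwise assembling the sources, goods, agents, and minimal envied subsets $Z_i$ — is realized purely through the minimal-envied-subset oracle with no hidden blow-up in value queries, and that $G_X$ can be kept current after a $U_2$ step or a decycling step within the same budget. Correctness of the output (it is EFX, $|P|<n$, and $v_i(P)\le v_i(X_i)$ for all $i$) is inherited verbatim from Theorem~\ref{existence of EFX intro}, so nothing new is needed there; the content of this theorem is purely the query-complexity bookkeeping layered on top of it and on the $nm$-query guarantee for Algorithm~\ref{minimal-valuable-subset}.
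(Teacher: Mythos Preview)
Your proposal is correct and follows exactly the paper's approach: multiply the $nmV/\Delta$ iteration bound of Theorem~\ref{existence of EFX intro} by the $nm$-query cost of Algorithm~\ref{minimal-valuable-subset} to obtain $n^2m^2V/\Delta$. The paper's own argument is in fact only this one-line product; the per-iteration bookkeeping for the chain of oracle calls inside $U_2$ that you (rightly) flag as the delicate point is not spelled out there, so you have actually supplied more detail than the paper does.
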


\subsection{An FPTAS to Determine an ``Almost'' Desired Allocation}
Algorithm~\ref{main algorithm} is pseudo-polynomial, since the increase in individual valuations of the agents when we perform the update rules could be very small. Suppose we just wanted an ``almost" EFX property, i.e., for every pair of agents $i$ and $j$, we are happy to ensure that $(1+\varepsilon)\cdot v_i(X_i) \geq v_i(X_j)$ and also $(1+\varepsilon)\cdot v_i(X_i) \geq v_i(P)$ for every $i$. Then we have an algorithm that runs in $\textup{poly}(n,m,\frac{1}{\varepsilon},\log V)$ time and finds a desired allocation.

\begin{theorem}
\label{thm:fptas}
For normalized and monotone valuations, given any $\varepsilon > 0$, using
$\textup{poly}(n,m,\frac{1}{\varepsilon},\log V)$ value queries,
we can determine an allocation $X = \langle X_1,X_2,\dots ,X_n \rangle $ and a pool of unallocated goods $P$ such that

  \begin{itemize}
  \item for any pair of agents $i$ and $j$, we have
    $(1+\varepsilon)\cdot v_i(X_i) \geq v_i(X_j \setminus g)$ for all $g \in X_j$,
      \item for any agent $i$, we have $(1+ \varepsilon)\cdot v_i(X_i) \geq v_i(P)$, and 
      \item $\lvert P \rvert < n$.
  \end{itemize}
\end{theorem}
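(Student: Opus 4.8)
The plan is to discretize all valuations so that the pseudo-polynomial bound $nmV/\Delta$ from Theorem~\ref{existence of EFX intro} collapses to a polynomial, while only losing a $(1+\varepsilon)$ factor in the guarantees. First I would fix a target granularity: for each agent $i$, replace $v_i$ by a rounded version $\hat v_i$ obtained by rounding every value $v_i(S)$ down (or to the nearest power of $(1+\delta)$ for a suitable $\delta = \Theta(\varepsilon/m)$, or down to the nearest multiple of a common grid). The key point is that monotonicity and normalization are preserved under rounding down, so Algorithm~\ref{main algorithm} can be run on the instance $(\hat v_i)_{i\in[n]}$ and, by Theorem~\ref{existence of EFX intro}, it terminates with an allocation $X$ and pool $P$ that is EFX, satisfies $\hat v_i(X_i)\ge \hat v_i(P)$, and has $|P|<n$.

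Next I would control the number of iterations. With the rounded valuations, the relevant quantities become $\hat V = \max_i \hat v_i(M) \le V$ and $\hat\Delta$, the minimum nonzero gap between distinct rounded valuations; by construction of the grid (spacing roughly $\varepsilon V /(\poly(n,m))$, or multiplicative spacing $1+\delta$ combined with a floor that zeroes out tiny values), we get $\hat V/\hat\Delta = \poly(n,m,1/\varepsilon)$, so the iteration bound $nm\hat V/\hat\Delta$ is polynomial. One subtlety: in the multiplicative-rounding variant, values close to $0$ must be handled so that $\hat\Delta$ is not vanishingly small — either snap all values below some threshold $\varepsilon v_i(M)/(2n)$ to $0$, or simply use an additive grid of width $\varepsilon V/(2nm)$, which makes $\hat\Delta$ equal to the grid width and bounds the rounding error per set by $\varepsilon V/(2nm)\cdot|S| \le \varepsilon V/(2n)$. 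Each value query in the rounded instance is answered by one value query in the true instance followed by $O(1)$ arithmetic, so the query complexity is $\poly(n,m,1/\varepsilon,\log V)$ as claimed (the $\log V$ accounts for arithmetic on the valuations).

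Finally I would translate the guarantees back to the original valuations. If rounding down changes each set's value by at most an additive $\eta := \varepsilon V/(2n)$ (equivalently a multiplicative $(1+\varepsilon)$ when we also know $v_i(X_i)$ is not too small relative to $V$ — which one can ensure, since a source in the envy graph has $v_i(X_i)\ge v_i(M)/n$ is not automatic, so the additive argument is cleaner), then for the EFX condition: $\hat v_i(X_i)\ge \hat v_i(X_j\setminus g)$ gives $v_i(X_i)\ge \hat v_i(X_i)\ge \hat v_i(X_j\setminus g)\ge v_i(X_j\setminus g)-\eta$, and one argues $v_i(X_j\setminus g)\le (1+\varepsilon)v_i(X_i)$ by noting that whenever $v_i(X_j\setminus g) > v_i(X_i)$ we also have (for a source-based or nonempty bundle) $v_i(X_i)$ comparable to the values involved; the robust way is to carry the multiplicative grid so that $\hat v_i(S)\le v_i(S)\le (1+\varepsilon)\hat v_i(S)$ directly, making the chain $v_i(X_j\setminus g)\le(1+\varepsilon)\hat v_i(X_j\setminus g)\le (1+\varepsilon)\hat v_i(X_i)\le (1+\varepsilon)v_i(X_i)$ immediate, and likewise $v_i(P)\le(1+\varepsilon)\hat v_i(P)\le(1+\varepsilon)\hat v_i(X_i)\le(1+\varepsilon)v_i(X_i)$. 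The bound $|P|<n$ is unchanged since it does not depend on valuations. The main obstacle I anticipate is exactly this last reconciliation: choosing a rounding scheme that is simultaneously (a) monotone and normalized so Theorem~\ref{existence of EFX intro} applies verbatim, (b) coarse enough that $\hat V/\hat\Delta$ is polynomial, and (c) fine enough that the relative error is $(1+\varepsilon)$ rather than merely additive — the multiplicative-grid-with-threshold approach resolves all three, but one must check that thresholding small values to $0$ does not spuriously create or destroy envy relations in a way that breaks the EFX guarantee after translating back, which is where the $\varepsilon/n$-type slack in the threshold is spent.
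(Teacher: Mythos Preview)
Your approach---round the valuations, run the exact algorithm of Theorem~\ref{existence of EFX intro} on the rounded instance, then translate back---is genuinely different from the paper's, and the obstacle you flag at the end is a real gap rather than a detail to be filled in.

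The paper does not round the input. It simply relaxes the notion of envy \emph{inside} Algorithm~\ref{main algorithm}: a set $S$ counts as envied by agent $i$ only when $v_i(S) > (1+\varepsilon)\,v_i(X_i)$, and the preconditions of $U_1$, $U_2$, and the minimal-envied-subset oracle are all read with this relaxed inequality. Each application of $U_1$ or $U_2$ then multiplies some agent's bundle value by at least $(1+\varepsilon)$, so there are at most $\poly(n,\log_{1+\varepsilon}V)$ such applications; interleaving with $U_0$ gives the factor $m$. When no rule applies, the output is $(1+\varepsilon)$-EFX and satisfies $(1+\varepsilon)v_i(X_i)\ge v_i(P)$ \emph{by definition of the stopping condition}---there is no translation step and hence no loss.

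Your route, by contrast, must convert an exact EFX guarantee for $\hat v$ into a multiplicative one for $v$, and this conversion can fail. With the multiplicative grid plus threshold $\tau$, the chain $v_i(X_j\setminus g)\le(1+\delta)\hat v_i(X_j\setminus g)\le(1+\delta)\hat v_i(X_i)\le(1+\delta)v_i(X_i)$ is valid only when $v_i(X_j\setminus g)\ge\tau$; below threshold it says nothing, and nothing forces $v_i(X_i)>0$. Concretely, take two agents and three goods where all of agent~$1$'s values lie below $\tau$; then $\hat v_1\equiv 0$, and one legal run of Algorithm~\ref{main algorithm} on $\hat v$ assigns every good to agent~$2$ (each $U_0$ step preserves EFX for $\hat v$ since $\hat v_1$ of any proper subset is~$0$), terminating with $v_1(X_1)=0<v_1(X_2\setminus g)$. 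The additive grid has the same defect: it yields only $v_i(X_i)\ge v_i(X_j\setminus g)-\eta$, which cannot be upgraded to the multiplicative bound without a lower bound on $v_i(X_i)$ that you do not have. No ``$\varepsilon/n$ slack'' repairs this, because $(1+\varepsilon)\cdot 0\ge v_i(X_j\setminus g)$ is simply false whenever the right side is positive. The paper's approach avoids the issue entirely by never leaving the original valuation scale.
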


The proof follows in a straightforward manner from the proof of Theorem~\ref{existence of EFX intro} in Section~\ref{sec:main}. The key idea is that the ``almost'' EFX property is violated if and only if $(1+\varepsilon)\cdot v_i(X_i) < v_i(X_j \setminus g)$ for some $i,j \in [n]$ or $(1+\varepsilon)\cdot v_i(X_i) < v_i(P)$ for some $i \in [n]$. So every time we apply update rules $U_1$ or $U_2$, there is a multiplicative improvement (by a factor of $1+\varepsilon$) in the valuation of some agents. Since these valuations are upper-bounded by $V$ we get a bound of $\textup{poly}(n,m,\log_{(1+\varepsilon)}V)$ on the number of iterations.    

\subsection{New Proof of a Result from \cite{TimPlaut18} for Identical Valuations}
\label{sec:appendix}
For agents with identical (general) valuations, it was shown by Plaut and Roughgarden~\cite{TimPlaut18} that an allocation that maximizes the minimum valuation, then maximizes the size of this bundle, then  maximizes the second minimum valuation, then maximizes the size of this bundle, and so on is EFX. We now show that Algorithm~\ref{main algorithm} gives another proof that when all the agents have identical valuations, a complete allocation that is EFX always exists. 

Recall that Algorithm~\ref{main algorithm} consists of applying 3 update rules: $U_0, U_1, U_2$ -- whichever of these is applicable. Moreover, if a certain precondition is satisfied (see Algorithm~\ref{Update Rules}), then rule $U_2$ is applicable. 

We will now show that when all the agents have identical valuations and rule $U_0$ is not applicable, then the precondition of rule $U_2$ is satisfied as long as there is some unallocated good. Let $X = \langle X_1,\ldots,X_n\rangle$ be the current allocation and $P = M \setminus \cup_{i=1}^n X_i$ be the set of unallocated goods in $X$. 

\begin{lemma}
\label{lemma:appendix}
 Let $s$ be any source vertex in the envy-graph $G_X$. If $\abs{P} \ge 1$ and rule $U_0$ is not applicable then $s$ can be chosen as the most envious agent of $X_s \cup g$ for any $g \in P$.
\end{lemma}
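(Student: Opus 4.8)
The plan is to use the special structure of identical valuations: if all agents share the common valuation $v$, then a source of the envy-graph holds a bundle of \emph{minimum} value among all bundles, so any set that some agent envies is automatically envied by the source $s$ as well. The whole argument is then a short unwinding of Definition~\ref{mostenviousagent-definition} together with Lemma~\ref{Rule U0}(b).

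First I would record the two consequences of the hypotheses. Since $s$ is a source of $G_X$, no agent $i$ satisfies $v(X_i) < v(X_s)$, i.e.\ $v(X_s) \le v(X_j)$ for every agent $j$. Since rule $U_0$ is not applicable, Lemma~\ref{Rule U0}(b) applied to the source $s$ and the good $g \in P$ (which exists because $|P| \ge 1$) yields an agent $j \ne s$ with $v(X_s \cup g) > v(X_j)$; in particular $X_s \cup g$ is an envied set, hence it has at least one inclusion-wise minimal envied subset. Fix such a $Z \subseteq X_s \cup g$.

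Next I would check that the pair $(s,Z)$ witnesses that $s$ is a most envious agent of $X_s \cup g$ in the sense of Definition~\ref{mostenviousagent-definition}. By choice of $Z$ there is an agent $i$ with $v(Z) > v(X_i)$; combining this with $v(X_i) \ge v(X_s)$ gives $v(Z) > v(X_s)$, so $s$ itself envies $Z$. Moreover, because $Z$ is an inclusion-wise minimal envied subset of $X_s \cup g$, no proper subset $Z' \subsetneq Z$ is envied by any agent. These are exactly the two conditions in Definition~\ref{mostenviousagent-definition} (with the witnessing agent taken to be $s$ and the minimal subset taken to be $Z$), and since ties in that definition are broken arbitrarily, $s$ may be chosen as the most envious agent of $X_s \cup g$, which is the claim.

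There is essentially no hard step; the only points that require a moment's care are that a minimal envied subset exists to begin with --- which is precisely what non-applicability of $U_0$ buys us through Lemma~\ref{Rule U0}(b) --- and the elementary but crucial observation that with identical valuations a source owns a minimum-value bundle, which is what lets the envy of $Z$ by some agent $i$ transfer to $s$.
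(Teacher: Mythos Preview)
Your proof is correct and follows essentially the same approach as the paper's own proof: both use that a source in $G_X$ has a minimum-value bundle under identical valuations, so any inclusion-wise minimal envied subset $Z$ of $X_s\cup g$ is in particular envied by $s$, which lets $s$ serve as the most envious agent. Your write-up is in fact slightly more explicit in invoking Lemma~\ref{Rule U0}(b) and in checking both clauses of Definition~\ref{mostenviousagent-definition}, but the substance is the same.
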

\begin{proof}
  Let $g \in P$ and $s$ be any source in the envy-graph $G_X$. Since rule $U_0$ is not applicable, there is an agent $t$ such that
  $v(X_t) < v(S')$ for some $S' \subset S = X_s \cup g$, where $v$ is the common valuation function of all agents.
  Let $Z \subseteq S$ be an inclusion-wise minimal envied subset of $S$ such that $v(X_t) < v(Z)$. Since $s$ is a source in $G_X$, we have $v(X_s) \le v(X_t)$. So  $v(X_s) <  v(Z)$; thus $s$ can be chosen as the most envious agent of $X_s \cup g$.
\end{proof}

Lemma~\ref{lemma:appendix} implies that while $P \ne \emptyset$, either rule $U_0$ or rule $U_2$ is applicable. Whenever we apply rule $U_2$, we add any good $g$ in $P$ to the bundle of a source $s$ in $G_X$ and determine an inclusion-wise minimal envied set $Z \subseteq X_s \cup g$. Note that $v(Z) > v(X_s)$. We then throw the goods in $(X_s \cup g) \setminus Z$ back into the pool $P$ and set $X_s = Z$.

This makes agent~$s$ strictly better-off and no agent is worse-off: thus we have made progress. So when Algorithm~\ref{main algorithm} 
terminates, we have an EFX allocation with $P = \emptyset$. Thus we have a complete allocation $X = \langle X_1,\ldots,X_n\rangle$ that is
EFX.

\paragraph{EFX for 2 agents.}
As observed in \cite{TimPlaut18}, an interesting consequence of the existence of EFX when all agents have identical valuations is the existence of EFX
for 2 agents with distinct valuations, say $v_1$ and $v_2$. First, an EFX allocation $\langle X_0, M\setminus X_0\rangle$ is obtained for 2 agents --
both with the same valuation function $v_1$. The {\em more valuable} set in $\{X_0, M\setminus X_0\}$ with respect to valuation function $v_2$ is assigned to agent~2 while agent~1 is assigned the other bundle. It is easy to see that this is an EFX allocation.

However finding such an EFX allocation may be hard, as shown by Plaut and Roughgarden~\cite{TimPlaut18}, and this is so even for two agents with identical submodular valuations. It was shown in \cite{TimPlaut18} that an exponential number of value queries may be required to determine an EFX allocation here. It follows from Theorem~\ref{thm:pseudo-poly} that an EFX allocation for two agents with general (possibly distinct) valuations can be determined using pseudo-polynomial many value queries. Moreover, even when agents have general valuations functions, we can determine a $(1-\varepsilon)$ EFX allocation with charity  (as in Theorem~\ref{thm:fptas})  with
$\textup{poly}(n,m,\frac{1}{\varepsilon},\log V)$ many value queries.

\section{Guarantees on Other Notions of Fairness}
\label{sec:additive-valuations}
In this section we assume that all agents have additive valuations.
We show that a minor variant of our algorithm finds an allocation with a good guarantee on Nash social welfare {\em and} groupwise maximin share (GMMS).

\medskip

\noindent{\bf Guarantee in Terms of Nash Social Welfare.}
We claimed in Section~\ref{sec:intro} that for additive valuations, it can also be ensured that for each $i$, we have $v_i(X_i) \ge \frac{1}{2}\cdot v_i(X^*_i)$ where $X^* = \langle X^*_1,\ldots,X^*_n\rangle$ is an  optimal {\em Nash social welfare} allocation and $X$ is the allocation in Theorem~\ref{existence of EFX intro}. This is easy to see from Algorithm~\ref{main algorithm}:
\begin{itemize}
\item[--]rather than initialize $X_i = \emptyset$, we will initialize $X_i$ to the bundle corresponding to the allocation determined by the algorithm in \cite{CaragiannisGravin19}.
\end{itemize}
  So we have $v_i(X_i) \ge \frac{1}{2}\cdot v_i(X^*_i)$, to begin with. As the algorithm progresses, our invariant is that $v_i(X_i)$ never decreases for any $i$. So if  $X' = \langle X'_1,\ldots,X'_n\rangle$ is the final allocation computed by our algorithm, then we have $v_i(X'_i) \ge \frac{1}{2}\cdot v_i(X^*_i)$ for all $i \in [n]$.

\begin{lemma}
\label{EFX with NSW}
 Given a set $N$ of agents with additive valuations and a set $M$ of goods, there exists an allocation $X = \langle X_1,\ldots,X_n\rangle$ and a pool $P$ of unallocated goods that satisfy all the conditions of Theorem~\ref{existence of EFX intro} and $v_i(X_i) \geq \tfrac{1}{2}v_i(X^{*}_i)$ for all $i \in N$, where $X^{*} = \langle X^*_1,\ldots,X^*_n\rangle$ is an  optimal Nash social welfare allocation. 
\end{lemma}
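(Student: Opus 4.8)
The plan is to run Algorithm~\ref{main algorithm} verbatim, but with a different initialization: instead of starting from $X_i \leftarrow \emptyset$ for all $i$, we start from the partial allocation produced by the algorithm of Caragiannis et al.~\cite{CaragiannisGravin19}, which for additive valuations is EFX and satisfies $v_i(X_i) \ge \tfrac12 v_i(X^*_i)$ for every $i$ (with in fact $X_i \subseteq X^*_i$). First I would check that this is a legitimate starting configuration for Algorithm~\ref{main algorithm}: the loop invariant requires that $X$ be EFX and that the envy-graph $G_X$ be acyclic. EFX holds by \cite{CaragiannisGravin19}; acyclicity we enforce by an initial application of the decycling step, which by Lemma~\ref{decylcifying} preserves the EFX property and, moreover, does not decrease $v_i(X_i)$ for any $i$ (in the proof of Lemma~\ref{decylcifying} one has $v_i(X'_i) \ge v_i(X_i)$ for every agent, with strict inequality on the cycle). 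The unallocated goods $M \setminus \cup_i X_i$ form the initial pool $P$; there is no precondition on $P$ at initialization, so this is fine.

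The heart of the argument is the invariant that $v_i(X_i)$ is non-decreasing in $i$ throughout the entire run, i.e.\ no update rule and no decycling step ever lowers any agent's own valuation. I would verify this rule by rule, reading off Algorithm~\ref{Update Rules}. In rule $U_0$, only agent~$i$'s bundle changes, from $X_i$ to $X_i \cup g$, so $v_i$ weakly increases by monotonicity and everyone else is untouched. In rule $U_1$, only the most envious agent's bundle changes, and it changes to a set she strictly prefers ($v_i(Z) > v_i(X_i)$), with all other bundles unchanged. In rule $U_2$, every agent $u^i_k$ on one of the shift-paths either keeps her old bundle or is handed a bundle she (weakly) prefers along the leftward shift, each $t_i$ receives $Z_{i-1}$ which it strictly preferred to $X_{t_i}$, and every remaining agent keeps $X_j$; hence $v_j(X'_j) \ge v_j(X_j)$ for all $j$ (this is already observed in the proof of the $U_2$ lemma). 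Finally, the decycling step after each update preserves the same property by Lemma~\ref{decylcifying}. Combining these, if $X'$ is the allocation at termination then $v_i(X'_i) \ge v_i(X_i^{\text{init}}) \ge \tfrac12 v_i(X^*_i)$ for all $i$.

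It remains to observe that all conclusions of Theorem~\ref{existence of EFX intro} still hold under the modified initialization. Termination and the iteration bound are unaffected: each application of $U_1$ or $U_2$ still raises $\phi(X) = \sum_i v_i(X_i)$ by at least $\Delta$, and $\phi(X) \le nV$ throughout, while $U_0$ strictly shrinks the pool and never increases $\phi$; so the number of iterations is still at most $nmV/\Delta$ (starting from a larger $\phi$ only makes the run shorter). When the loop exits, none of $U_0, U_1, U_2$ is applicable; the proof of Theorem~\ref{existence of EFX intro} then yields that $X'$ is EFX, that $v_i(X'_i) \ge v_i(P)$ for all $i$ (non-applicability of $U_1$), and that $|P|$ is smaller than the number of sources of $G_{X'}$, hence $|P| < n$ (non-applicability of $U_0$ and Lemma~\ref{cycle-existence} force $|P| < n$). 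Together with the inequality $v_i(X'_i) \ge \tfrac12 v_i(X^*_i)$ from the previous paragraph, this is exactly the statement of the lemma.

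\textbf{Main obstacle.} There is no deep difficulty here; the one thing that must be gotten right is the strengthening from ``social welfare does not decrease'' (which is all Theorem~\ref{existence of EFX intro} literally needs) to ``\emph{each} agent's own valuation does not decrease'', since the NSW lower bound is a per-agent guarantee. This strengthening is not automatic for an arbitrary welfare-improving move, but it does hold for our specific update rules because each rule only ever reassigns an agent her current bundle or a bundle she strictly preferred — so the work is simply to spell this out for $U_0$, $U_1$, $U_2$ and the decycling step, as above.
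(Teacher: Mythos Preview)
Your proposal is correct and follows exactly the approach the paper uses: initialize Algorithm~\ref{main algorithm} with the Caragiannis--Gravin--Huang allocation and invoke the per-agent monotonicity of $v_i(X_i)$ across all update rules and decycling. One tiny slip: you write that $U_0$ ``never increases $\phi$'', whereas in fact it never \emph{decreases} $\phi$; this is harmless, since your termination argument correctly bounds successive $U_0$-applications by the pool size rather than by $\phi$.
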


\subsection{An Approximate MMS Allocation for Large $\mathbf{|P|}$}
\label{sec:approx-MMS}
We now show that if $|P|$ (the number of unallocated goods in our allocation) is sufficiently large, then our EFX allocation $X$ has a very good MMS guarantee. Recall that our algorithm continues till $|P|$ is smaller than the number of sources in the envy-graph $G_X$ and recall that sources are {\em unenvied} agents. In particular, if $|P| = n-1$, then the number of sources in $G_X$ is $n$; so no agent envies another. That is, for each $i$, we have $v_i(X_i) \ge v_i(X_j)$ for all $j \in [n]$. Moreover, $v_i(X_i) \ge v_i(P)$. So we have 
\begin{eqnarray*}
  v_i(X_i) \ \geq \ \frac{v_i(M)}{n+1} \ & \geq & \ \left(1 + \frac{1}{n}\right)^{-1} \cdot\frac{v_i(M)}{n} \\
  & \ge & \ \left(1 + \frac{1}{n}\right)^{-1} \cdot \MMS_i(n,M),
\end{eqnarray*}  
where for every agent $i$, the inequality $\MMS_i(n,M) \leq v_i(M)/n$ holds for additive valuations. We formalize the above intuition in Theorem~\ref{EFX-MMS}. The following proposition from \cite{JGargMT19} will be useful. It states that if we exclude any set of agents and at most the same number of any goods from $N$ and $M$, respectively, the maximin share of any remaining agent can only increase.

\begin{proposition}[\cite{JGargMT19}]
   \label{MMS-reducability}
     Let $N$ be a set of $n$ agents with additive valuations and $M$ be a set of $m$ goods. If $N' \subseteq N$ and $M' \subseteq M$ are such that $|N \setminus N'| \geq |M \setminus M'|$ then for any agent $i \in N'$, we have $\MMS_i(n',M') \geq \MMS_i(n,M)$ where $n' = \abs{N'}$.
\end{proposition}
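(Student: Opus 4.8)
The plan is to start from an optimal maximin partition for agent $i$ on the full instance $(N,M)$ and surgically convert it into a feasible $n'$-partition of $M'$ without ever lowering the minimum bundle value. First I would fix an arbitrary $i \in N'$ (the statement is vacuous if $N'=\emptyset$) and let $(B_1,\dots,B_n)$ be a partition of $M$ into $n$ bundles with $\min_{j\in[n]} v_i(B_j) = \MMS_i(n,M)$; such a partition exists since there are only finitely many partitions. Set $k = |N\setminus N'| = n-n'$ and note the hypothesis says $|M\setminus M'| \le k$.

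Next I would call a bundle $B_j$ \emph{clean} if $B_j \cap (M\setminus M')=\emptyset$, i.e.\ $B_j\subseteq M'$. Each good of $M\setminus M'$ lies in exactly one $B_j$, so at most $|M\setminus M'|$ bundles are non-clean, and hence at least $n - |M\setminus M'| \ge n-k = n'$ of the $B_j$ are clean. Pick any $n'$ clean bundles, say $B_{j_1},\dots,B_{j_{n'}}$; each is a subset of $M'$ and satisfies $v_i(B_{j_r}) \ge \MMS_i(n,M)$.

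Finally, the goods of $M'$ not already covered by $B_{j_1}\cup\cdots\cup B_{j_{n'}}$ all lie in the discarded (possibly non-clean) bundles; I would distribute them arbitrarily among $B_{j_1},\dots,B_{j_{n'}}$, obtaining a partition $(C_1,\dots,C_{n'})$ of $M'$ into $n'$ parts with $B_{j_r}\subseteq C_r$. Because additive valuations are nonnegative (hence monotone), $v_i(C_r) \ge v_i(B_{j_r}) \ge \MMS_i(n,M)$ for every $r$, so $\min_r v_i(C_r)\ge \MMS_i(n,M)$. Since $\MMS_i(n',M')$ is the maximum over all $n'$-partitions of $M'$ of the smallest bundle value, this particular partition witnesses $\MMS_i(n',M') \ge \MMS_i(n,M)$, as claimed.

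The only step that needs any care is the counting argument — bounding the number of ``contaminated'' bundles by $|M\setminus M'|$ and invoking $|N\setminus N'|\ge|M\setminus M'|$ to guarantee that at least $n'$ clean bundles survive; the rest is immediate from monotonicity. I expect no genuine obstacle. (The same proof in fact works for any monotone valuation, but only the additive case is needed here.)
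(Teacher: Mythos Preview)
Your argument is correct. Note, however, that the paper does not supply its own proof of this proposition: it simply quotes the result from~\cite{JGargMT19} and uses it as a black box in the proofs of Theorem~\ref{EFX-MMS} and Theorem~\ref{GMMS-guarantees}. Your proof is exactly the standard one (and indeed the one given in~\cite{JGargMT19}): fix an optimal $n$-partition for agent~$i$, observe that at most $|M\setminus M'|\le n-n'$ bundles are ``contaminated'' by removed goods, keep $n'$ clean bundles, and redistribute the leftover $M'$-goods using monotonicity. Nothing is missing; the remark that the argument only uses monotonicity (not additivity per se) is also accurate.
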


\begin{theorem}
 \label{EFX-MMS}
  Given a set $N$ of $n$ agents with additive valuations and a set $M$ of $m$ goods, there exists an allocation $X = \langle X_1,X_2, \dots, X_n \rangle$ and set $P$ of unallocated goods that satisfies:
\begin{itemize}
    \item the 3 conditions stated in Theorem~\ref{existence of EFX intro};
    \item $v_i(X_i) \geq \tfrac{1}{2}v_i(X^{*}_i)$ for all $i \in N$, where $X^{*}$ is an  optimal Nash social welfare allocation;
    \item $v_i(X_i) \geq \MMS_i(n,M)/\left(2 - \frac{k}{n}\right)$ for every $i \in N$, where $k = |P|$.
\end{itemize}
\end{theorem}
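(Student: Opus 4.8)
The first two bullets are precisely the statement of Lemma~\ref{EFX with NSW} (take $X,P$ to be the allocation and pool produced by the variant of the algorithm used there), so all the work goes into the third bullet: $v_i(X_i)\ge\MMS_i(n,M)/(2-k/n)$ for every agent $i$, with $k=|P|$. The plan is to fix $i$, abbreviate $\mu=\MMS_i(n,M)$, recall that any maximin partition of $M$ for $i$ consists of $n$ parts each of value $\ge\mu$ so that $v_i(M)\ge n\mu$, and then bound $v_i(M)=\sum_j v_i(X_j)+v_i(P)$ from above, term by term, in multiples of $v_i(X_i)$.

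For the upper bound I would exploit the structure guaranteed by Theorem~\ref{existence of EFX intro}: the envy-graph has more than $k$ sources and sources are unenvied, so at least $k$ agents $j\ne i$ satisfy $v_i(X_j)\le v_i(X_i)$, and also $v_i(P)\le v_i(X_i)$. For an agent $j$ that $i$ does envy, EFX gives $v_i(X_j\setminus g)\le v_i(X_i)$ for every $g\in X_j$; if $|X_j|\ge2$ I would pick $g$ of least value to $i$, use $v_i(g)\le v_i(X_j)/2$ (additivity), and conclude $v_i(X_j)\le 2v_i(X_i)$. Thus, apart from $X_i$ and $P$, every bundle is worth at most $2v_i(X_i)$, and at least $k$ of them at most $v_i(X_i)$ — with one exception: a bundle consisting of a single good $g$ with $v_i(g)>v_i(X_i)$, whose value is uncontrolled. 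Neutralizing these ``heavy singleton'' bundles is the one genuine obstacle.

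The tool for this is Proposition~\ref{MMS-reducability}. Let $B$ be the set of agents holding a heavy singleton (note $i\notin B$, and no source lies in $B$ since sources are unenvied), and delete the $b=|B|$ agents of $B$ together with their $b$ distinct goods. Since equally many agents and goods are removed, the proposition keeps the maximin share of $i$ at least $\mu$, so the surviving goods $M'=\bigcup_{j\notin B}X_j\cup P$ have $v_i(M')\ge(n-b)\mu$. On the other hand every bundle still present in $M'$ now obeys the bounds above, which after counting (two summands of value $\le v_i(X_i)$ for $X_i$ and $P$, at least $q'\ge k$ sources of value $\le v_i(X_i)$, and the remaining bundles of value $\le 2v_i(X_i)$) yields $v_i(M')\le(2n-2b-q')v_i(X_i)$. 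Combining the two estimates gives $v_i(X_i)\ge(n-b)\mu/(2n-2b-q')$, and the proof closes with the elementary verification that $(n-b)/(2n-2b-q')\ge n/(2n-k)$, i.e.\ $n(q'-k)+bk\ge0$, which holds since $q'\ge k$. (When $b=0$ this is just the plain counting argument; the reduction is needed only to absorb the heavy singletons.)
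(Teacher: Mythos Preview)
Your argument is correct and follows the same strategy as the paper: invoke Proposition~\ref{MMS-reducability} to strip away the singleton bundles that obstruct the count, then bound the total value of the surviving bundles and $P$ in multiples of $v_i(X_i)$ using the EFX inequality (giving the factor $2$ for bundles of size $\ge 2$), the source inequality, and $v_i(P)\le v_i(X_i)$. The only difference is which singletons get removed: the paper discards \emph{all} non-source, non-$i$ agents with $|X_j|\le 1$, which makes the final step a one-line monotonicity ($n'\le n$ implies $2-k/n'\le 2-k/n$), whereas you remove only the ``heavy'' singletons that $i$ actually envies and then close with the algebraic check $n(q'-k)+bk\ge 0$. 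Both reductions are valid; the paper's is slightly slicker at the end, yours is more parsimonious in what it throws away.
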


\noindent{\em Proof.}
Let $(X,P)$ be the allocation guaranteed by Lemma~\ref{EFX with NSW}.
Hence the first two conditions given in the theorem statement are satisfied by $(X,P)$.
So what we need to show now is that for any agent $i$, we have $v_i(X_i) \geq \MMS_i(n,M)/\left(2 - \frac{k}{n}\right)$. 

We fix some agent $i$ and let $N' \subseteq N$ be the set of agents consisting of all sources of $G_X$, agent~$i$ and all other agents $j$ with $\abs{X_j} \ge 2$. 
Let $M'$ be the set of goods allocated to the agents in $N'$. 
Observe that every agent in $N \setminus N'$ is allocated at most one good and so $\abs{N \setminus N'} \geq \abs{M \setminus (M' \cup P)}$.
By Proposition~\ref{MMS-reducability}, it holds that $\MMS_i(n',M' \cup P) \geq \MMS_i(n,M)$ where $n' = \abs{N'}$. Thus, it suffices to show that $v_i(X_i) \geq \MMS_i(n',M'\cup P)/\left(2 - \frac{k}{n}\right)$. 

\smallskip

Consider any agent $j \in N'$ with $\abs{X_j} \ge 2$.  
Because $X$ is EFX, it holds that $v_i(X_i) \geq v_i(X_j \setminus \left\{g\right\})$ for all $g \in X_j$. 
Since the valuations are additive, we have

%
%
%
%
%
\[v_i(X_i) \ \geq \ \left(1 - \frac{1}{\lvert X_j \rvert}\right)\cdot v_i(X_j) \ \geq \ \frac{1}{2}\cdot v_i(X_j).\] 

We know the following inequalities hold:
\begin{eqnarray}
v_i(X_i) &\geq&  v_i(P), \label{eq2} \\
v_i(X_i) &\geq&  v_i(X_j) \ \text{for all } j \ \text{that} \ \text{were sources in }G_X,  \label{eq3}\\
2v_i(X_i) &\geq & v_i(X_j)   \label{eq4} \ \text{for all other } j \in N'. \label{eq4}
\end{eqnarray}

Recall that the number of sources is at least $|P| + 1 = k+1$. Summing up all inequalities in (\ref{eq2})-(\ref{eq4}) and using the fact that $v_i$ is additive, we have $(2(n'-(k+1)) + k+2)\cdot v_i(X_i) \geq v_i(M' \cup P)$.

Hence we have 
\begin{align*}
    v_i(X_i) \ &\geq \ \frac{v_i(M'\cup P)}{2n'-k}\\
             \ &\geq \ \frac{v_i(M'\cup P)}{n'}\cdot\frac{n'}{2n'-k} \\
             \ &\geq \ \MMS_i(n',M'\cup P)\cdot\frac{n'}{2n'-k}  \ \ \text{(since $v_i$ is additive)} \\
             \ &= \ \MMS_i(n',M'\cup P)/\left(2 - \tfrac{k}{n'}\right)\\
             \ &\geq \ \MMS_i(n',M'\cup P)/\left(2 - \tfrac{k}{n}\right) \ \  \text{(since } n' \le n). \qed
\end{align*}

\subsection{An Improved Bound for Approximate-GMMS}
\label{sec:GMMS}
As mentioned in Section~\ref{sec:intro}, a new notion of fairness called {\em groupwise maximin share} (GMMS) was recently introduced by
Barman et al.~\cite{BarmanBMN18}. We formally define a GMMS allocation below.

\begin{definition}
Given a set $N$ of $n$ agents and a set $M$ of $m$ goods, an allocation $X = \langle X_1,X_2, \dots , X_n \rangle$ is $\alpha$-GMMS if for every $N' \subseteq N$, we have $v_i(X_i) \geq \alpha\cdot\MMS_i(n', \bigcup_{i \in N'}X_i)$ where $n' = \abs{N'}$. 
\end{definition}

Observe that a GMMS allocation is also an MMS allocation. Since MMS allocations do not always exist in a given instance~\cite{ProcacciaW14}, GMMS allocations also need not always exist.
Interestingly, $\tfrac{1}{2}$-GMMS allocations always exist~\cite{BarmanBMN18}. We now describe how to modify our allocation so that the
resulting allocation is $\tfrac{4}{7}$-GMMS.

Let $X = \langle X_1,\ldots,X_n\rangle$ be the allocation and $P$ be the pool of unallocated goods that satisfy the conditions of Lemma~\ref{EFX with NSW}. Without loss of generality, assume that agent~1 is a source in the envy-graph $G_X$. Define the complete allocation $Y = \langle Y_1,\ldots,Y_n\rangle$ as follows:
\begin{itemize}
\item[$\ast$] $Y_1 = X_1 \cup P$ and $Y_i=X_i$ for all $i \ne 1$.
\end{itemize}
Theorem~\ref{GMMS-guarantees} shows that $Y$ is our desired allocation.
The proof of Theorem~\ref{GMMS-guarantees} is similar to \cite[Proposition~3.4]{ABM18}. We also remark that one can use the proof of \cite[Proposition~3.4]{ABM18} to show that any EFX allocation is a $4/7$-GMMS. However note that $Y$ is not necessarily an EFX allocation. But it has sufficiently nice properties so that we can still show that it is $4/7$-GMMS.

\begin{theorem}
\label{GMMS-guarantees}
Given a set $N$ of $n$ agents with additive valuations and a set $M$ of $m$ goods, there exists a complete allocation $Y = \langle Y_1,Y_2,\dots, Y_n \rangle$ of $M$ such that 
\begin{itemize}
  \item $Y$ is $\tfrac{4}{7}$-GMMS.
  \item $v_i(Y_i) \geq \tfrac{1}{2} v_i(X^{*}_i)$ for all $i \in N$ where $X^*$ is the optimal Nash social welfare allocation.\footnote{In private communication we are aware that Jugal Garg and Setareh Taki have obtained independently related results. For \textbf{additive} valuations, they can show that there is an EFX-allocation after donating at most $n-1$ goods to charity. However, there is no bound on the value of the goods donated to charity. Thus they obtain a 4/7-GMMS-allocation after removing $n-1$ goods from the original set of goods.}
\end{itemize} 
\end{theorem}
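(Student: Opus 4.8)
The plan is to establish the two bullet points of Theorem~\ref{GMMS-guarantees} separately. The second bullet — that $v_i(Y_i) \ge \tfrac12 v_i(X^*_i)$ — is immediate: by construction $Y_i \supseteq X_i$ for all $i$, valuations are monotone, and Lemma~\ref{EFX with NSW} already gives $v_i(X_i) \ge \tfrac12 v_i(X^*_i)$. The whole work is in the first bullet, the $\tfrac47$-GMMS guarantee. So I would fix an arbitrary $N' \subseteq N$ with $|N'| = n'$, let $M' = \bigcup_{i \in N'} Y_i$, fix an agent $i \in N'$, and show $v_i(Y_i) \ge \tfrac47 \MMS_i(n', M')$.

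First I would record the structural facts about $Y$ that survive from $X$. Since $X$ is EFX and $Y$ differs from $X$ only by dumping $P$ onto the source agent~$1$: for every $j \ne 1$ and every good $g \in Y_j$ we still have $v_i(Y_i) \ge v_i(X_i) \ge v_i(Y_j \setminus g)$; thus for $|Y_j| \ge 2$ we get $v_i(Y_i) \ge (1 - 1/|Y_j|) v_i(Y_j) \ge \tfrac12 v_i(Y_j)$, and for $|Y_j| = 1$ we get $v_i(Y_i) \ge v_i(Y_j)$ (the EFX condition with the single good removed says $v_i(Y_i) \ge v_i(\emptyset) = 0$, which is useless, so here I'd instead use that $Y_j = X_j$ and if $i$ is a source nobody's single-good bundle is envied — more carefully: the relevant bound is that any agent values a singleton bundle $\{g\}$, and since removing $g$ leaves $\emptyset$ we only know EFX trivially, but for the counting argument below it suffices to treat singletons via $v_i(\{g\}) \le \MMS_i$-type bounds). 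The one bundle needing care is $Y_1 = X_1 \cup P$: here I use $v_i(X_1) \le v_i(X_i)$ (agent~$1$ is a source, so $i$ does not envy $X_1$) and $v_i(P) \le v_i(X_i)$ (Theorem~\ref{existence of EFX intro}), giving $v_i(Y_1) \le 2 v_i(X_i) \le 2 v_i(Y_i)$.

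Next, the reduction step, following the pattern of Theorem~\ref{EFX-MMS} and of \cite[Prop.~3.4]{ABM18}. I would pass from $N'$ to the sub-collection $N'' \subseteq N'$ consisting of agent~$i$, agent~$1$ (if $1 \in N'$), and every $j \in N'$ with $|Y_j| \ge 2$, setting $M'' = \bigcup_{j \in N''} Y_j$. Every agent in $N' \setminus N''$ holds exactly one good, so $|N' \setminus N''| \ge |M' \setminus M''|$, and Proposition~\ref{MMS-reducability} gives $\MMS_i(n'', M'') \ge \MMS_i(n', M')$; so it is enough to bound $v_i(Y_i)$ against $\MMS_i(n'', M'')$. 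Now $v_i(M'') = \sum_{j \in N''} v_i(Y_j) \le 2 v_i(Y_i) \cdot n''$ from the inequalities above, and crudely this only yields $v_i(Y_i) \ge \tfrac{1}{2} \cdot \frac{v_i(M'')}{n''} \ge \tfrac12 \MMS_i(n'', M'')$ — which is the known bound, not $\tfrac47$.

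**The hard part** is squeezing the constant from $\tfrac12$ up to $\tfrac47$, and this is exactly where I expect to reuse the Amanatidis–Birmpas–Markakis trick: one must not bound every two-good bundle by $2 v_i(Y_i)$ but argue that if many bundles each had value close to $2 v_i(Y_i)$ then agent~$i$ could not herself have an MMS-partition of value exceeding $\tfrac74 v_i(Y_i)$ — one compares $v_i(Y_i)$ simultaneously to (a) the total $v_i(M'')/n''$ and (b) the maximum bundle value, and takes a convex combination. Concretely I would set $x = v_i(Y_i)$, assume for contradiction $x < \tfrac47 \MMS_i(n'', M'')$, let $\mu = \MMS_i(n'', M'')$, and derive a contradiction by a careful accounting: in agent~$i$'s optimal MMS-partition every bundle has value $\ge \mu > \tfrac74 x$, so every such bundle contains either a good of value $> x$ (a "big" good) or at least two goods; big goods are problematic because a bundle $Y_j$ of size $\ge 2$ containing a big good would violate EFX-derived bound $v_i(Y_j \setminus g) \le x$ unless the rest of $Y_j$ is cheap — chasing this dichotomy, counting big goods versus bundles, and using $\sum_j v_i(Y_j) \le 2 n'' x$ together with the MMS-partition having $n''$ bundles each $> \tfrac74 x$, should force $v_i(M'') > 2 n'' x$, a contradiction. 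Writing this counting cleanly — matching the case analysis of \cite[Prop.~3.4]{ABM18} while accounting for the fact that $Y_1$ is the merged bundle rather than an EFX bundle (it still satisfies $v_i(Y_1) \le 2x$, which is all that proof uses) — is the one genuinely delicate part; everything else is bookkeeping.
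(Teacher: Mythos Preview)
Your setup is right: the Nash-welfare bullet is immediate, the structural bounds $v_i(Y_j\setminus g)\le v_i(Y_i)$ for $j\ne 1$ and $v_i(Y_1)\le 2v_i(Y_i)$ are exactly what is needed, and the first reduction via Proposition~\ref{MMS-reducability} (dropping agents with singleton bundles) is the correct opening move. The problem is in the ``hard part'': your sketched contradiction does not close. From ``every MMS bundle has value $>\tfrac74 x$'' you only get $v_i(M'')>\tfrac74 n'' x$, which is perfectly compatible with $v_i(M'')\le 2n'' x$; nothing you wrote bridges that gap. Your ``big goods'' dichotomy also collapses, because every good in $M''$ already satisfies $v_i(g)\le v_i(Y_i)=x$ (for $g\in Y_j$ with $j\ne 1$ and $|Y_j|\ge 2$ this is EFX after removing another good; for $g\in Y_1=X_1\cup P$ it follows from $v_i(X_1)\le x$ and $v_i(P)\le x$), so there are no big goods to count.

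The missing idea is to exploit the dependence on $|Y_j|$ that you wrote down but then discarded: from $v_i(Y_i)\ge(1-1/|Y_j|)\,v_i(Y_j)$ you get $v_i(Y_j)\le \tfrac32\,v_i(Y_i)$ whenever $|Y_j|\ge 3$. The paper therefore splits the remaining bundles into \emph{good} ones (size $\ge 3$, plus $Y_1$ and $Y_i$) and \emph{bad} ones (size exactly $2$). Good bundles cost at most $\tfrac32 v_i(Y_i)$ each (with $Y_1$ costing $\le 2v_i(Y_i)$, compensated by $Y_i$ costing $v_i(Y_i)$), while bad bundles can cost up to $2v_i(Y_i)$ --- but each \emph{good} inside a bad bundle costs $\le v_i(Y_i)$. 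The second idea you are missing is an iterated MMS-reduction on the bad goods: as long as there are more than $n'$ bad goods, some bundle of agent~$i$'s optimal MMS partition contains two of them; deleting those two goods and redistributing the rest of that bundle drops $n'$ by one without decreasing $\MMS_i$. After this reduction the number $x$ of bad goods is at most $n'$, and a direct count gives $v_i(M')\le x\cdot v_i(Y_i)+\tfrac32(n'-x/2)\,v_i(Y_i)=\tfrac{6n'+x}{4}\,v_i(Y_i)\le \tfrac{7n'}{4}\,v_i(Y_i)$, which is exactly $\MMS_i(n',M')\le\tfrac74 v_i(Y_i)$.
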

\begin{proof}
Observe that the bound on Nash social welfare holds for allocation $X$ and thus for allocation $Y$ (since $v_i(Y_i) \ge v_i(X_i)$ for all $i \in [n]$). So what we need to show now is the guarantee on GMMS.
That is, we need to show that
for every $\widetilde{N} \subseteq N$ and all $i \in \widetilde{N}$, we have $v_i(Y_i) \geq \tfrac{4}{7} \MMS_i(\widetilde{n}, \widetilde{M})$ where $\widetilde{n} = \abs{\widetilde{N}}$ and $\widetilde{M} = \bigcup_{j \in \widetilde{N}}Y_j$.

Fix some $i \in \widetilde{N}$. Define $N'$ as the subset of $\widetilde{N}$ that contains $i$ and all agents that have been allocated at least two goods in $Y$, i.e., $j \in N'$ if and only if $j=i$ or $\abs{Y_j} \geq 2$. Let $M' = \bigcup_{j \in N'}Y_j$. 

Note that $Y$ allocates all goods of $\widetilde{M} \setminus M'$ to agents in $\widetilde{N} \setminus N'$. Since every agent in $\widetilde{N} \setminus N'$ has been allocated at most one good, we have $\abs{\widetilde{N} \setminus N'} \geq \abs{\widetilde{M} \setminus M'}$. Proposition~\ref{MMS-reducability} tells us that $\MMS_i(n',M') \geq \MMS_i(\widetilde{n},\widetilde{M})$ where $n' = \abs{N'}$. Thus it suffices to show $v_i(Y_i) \geq 4/7\cdot\MMS_i(n',M')$.

\smallskip

We now classify the goods in $M'$ as good or bad. A good is \emph{good} if it contained in a $Y_j$ of cardinality at least three or is contained in $Y_1$ or $Y_i$. All other goods are \emph{bad}, i.e., a bad good is contained in a bundle of cardinality two different from $Y_1$ and $Y_i$. A bundle in $Y$ containing good goods is good. 

We will next reduce the problem further. Let $x$ be the number of bad goods in $M'$. Since the bad goods are contained in bundles of cardinality two, the good goods in $M'$ come from $n' - x/2$ good bundles of $Y$. 
As long as $x > n'$, we will apply a reduction step. Each reduction step will reduce the number of bad goods in $M'$ by two, the number of agents by one, will not decrease the $\MMS_i$-value, and will leave the quantity $n' - x/2$ and set of good goods in $M'$ unchanged. 

Let $Z = \langle Z_1,Z_2, \dots Z_{n'} \rangle$  be an optimal MMS partition for agent $i$ of the set $M'$ of goods. 
If there are more than $n'$ bad goods in $M'$, there is a set $Z_k$ with at least {\em two} bad goods, say $g_1$ and $g_2$. We distribute the goods in $Z_k \setminus \{g_1,g_2\}$ arbitrarily among the other sets in $Z$.  So we have a partition of the set $M' \setminus \sset{g_1,g_2}$ of goods into $n'-1$ many bundles. The value for agent~$i$ of any remaining bundles did not decrease. We set $M'$ to  $M' \setminus \sset{g_1,g_2}$ and decrement $n'$. Note that we reduced the number of bad goods by two, the number of agents by one, did not decrease $\MMS_i(n',M')$ and the set of good goods in $M'$ still come from the $n' - x/2$ good bundles in $Y$. 
We keep repeating this reduction until $M'$ contains at most $n'$ bad goods. At this point, we have a set $M'$ of goods and an integer $n'$ with the following properties: 
\begin{compactenum}[(1)]
\item The number $x$ of bad goods in $M'$ is at most $n'$.
\item $\MMS_i(n',M') \geq \MMS_i(\widetilde{n},\wM)$, and 
\item The set of good goods in $M'$ has not changed. They come from $n' - x/2$ good bundles in $Y$. 
\end{compactenum}

\smallskip

We will next relate the value of good and bad goods to the value of $Y_i$. 
\begin{lemma} We have
  \begin{compactenum}[a)]
  \item For any bad good $g$, $v_i(g) \le v_i(Y_i)$.
  \item $v_i(Y_1) \le 2v_i(Y_i)$.
  \item If $j \not= 1$ and $Y_j$ is a good bundle then $v_i(Y_j) \le 3/2 \cdot v_i(Y_i)$.
  \end{compactenum}
\end{lemma}
\begin{proof}
\begin{compactenum}[a)]
\item Let $Y_j = \sset{g,g'}$ be the bundle containing $g$. Since $j \neq 1$ (by definition of a bad good), we have 
$v_i(Y_i) \geq v_i(X_i) \geq  v_i(X_j \setminus g') = v_i(Y_j \setminus g') = v_i(g)$. 
\item Since agent~1 is a source, $v_i(X_i) \geq v_i(X_1)$. By Theorem~\ref{existence of EFX intro}, $v_i(X_i) \geq v_i(P)$. Therefore, $v_i(Y_1) = v_i(X_1 \cup P) = v_i(X_1) + v_i(P) \leq v_i(X_i) + v_i(X_i) = 2v_i(X_i)=2v_i(Y_i)$.
\item 
Let $g \in Y_j$ be such that $v_i(g)$ is minimal. Then $v_i(Y_j - g) \le v_i(Y_i)$ and $v_i(g) \le v_i(Y_j)/\abs{Y_j}$. Thus 
\begin{eqnarray*} v_i(Y_i)  \geq \left(1- \frac{1}{\abs{Y_j}}\right)\cdot v_i(Y_j) & \geq & \left(1- \frac{1}{3 }\right)\cdot v_i(Y_j)\\  
                                                                                  & = & \frac{2}{3}\cdot v_i(Y_j). 
\end{eqnarray*}
\end{compactenum}
\end{proof}

Now we are ready to show the bound on GMMS. We have $x$ bad goods in $M'$. The good goods in $M'$ come from $n' - x/2$ good bundles. Also $x \le n'$.
The total value of the good goods for agent $i$ is at most \[\frac{3}{2}(n'-\tfrac{x}{2}-2)\cdot v_i(Y_i) + v_i(Y_i) + 2v_i(Y_i) = \frac{3}{2}(n' -\tfrac{x}{2})\cdot v_i(Y_i),\]
since there are at most $n' - x/2 - \abs{\sset{1,i}}$ good bundles different from $Y_1$ and $Y_i$: each of value at most 
$(3/2\cdot v_i(Y_i))$, and $Y_1$ has value at most $2v_i(Y_i)$. 
Also, the total value of the bad goods for agent $i$ is at most $x\cdot v_i(Y_i)$, since there are $x$ many bad goods and each bad good is worth at most $v_i(Y_i)$. Therefore, 
\begin{align*}
v_i(M') &= v_i(\text{bad goods in $M'$}) + v_i(\text{good goods in $M'$})\\
       &\leq x\cdot v_i(Y_i) + \frac{3}{2}(n'-\frac{x}{2})\cdot v_i(Y_i) \\
       &=\frac{6n'+x}{4}\cdot v_i(Y_i)
       \leq \frac{7n'}{4}\cdot v_i(Y_i). 
\end{align*} 
\end{proof}

\section{Conclusions and Open Problems}
We studied the existence of EFX allocations when agents have general valuations. We showed that we can ensure such an allocation always exists when we donate a small number of goods that nobody envies to charity. The major open problem here is whether complete EFX allocations always exist. 
Plaut and Roughgarden~\cite{TimPlaut18} remarked that an instance with {\em no} complete EFX allocation may be easier to find in the setting of general valuations.  Our result on ``almost-EFX'' allocations for general valuations allows one to hope that complete EFX allocations always exist, at least for more structured valuations such as additive.


We also showed that we get guarantees in terms of other notions of fairness when agents have additive valuations. To the best of our knowledge, allocations with good guarantees (i.e., constant factor approximation) on Nash social welfare and MMS (as well as GMMS) were not known prior to our work. It would also be interesting to investigate whether these guarantees can be improved or if instances can be constructed where our guarantees are tight. We believe that our work is just the beginning towards determining an allocation that gives good guarantees with respect to several notions of fairness: an allocation that is \emph{universally} fair. 

Very recently, Amanatidis et al.\ \cite{ANM19} have announced an allocation that has good approximation guarantees simultaneously with respect to four notions of fairness when valuation functions are additive: in particular, their allocation is $(\phi-1)$-EFX and $2/(\phi+2)$-GMMS, where $\phi \approx 1.618$ is the golden ratio. Moreover, a fine-tuned version of their algorithm also achieves $4/7$-GMMS which matches our result (see Theorem~\ref{GMMS-guarantees}). They also show that for additive valuations, when $m$ is at most $n+2$, where $m$ is the number of goods and $n$ is the number of agents, GMMS (and hence, EFX) allocations always exist.

\bigskip

\noindent{\bf Acknowledgments.} We thank the reviewers of the conference version of our paper in SODA~2020 for their helpful comments.

\end{document}